  \newcommand{{ \input{ps/.pstex_t} }}[1]{{ \input{pdf/#1.pdftex_t} }}
  \newcommand{{ \input{ps/.pstex_t} }}[1]{{ \input{ps/#1.pstex_t} }}
\newcommand{\cancel}[1]{}
\newcommand{\eps}{\varepsilon}
\DeclareMathOperator{\WCF}{WCF}
\DeclareMathOperator{\CF}{CF}
\newcounter{reffoot}
\begin{document}

\title{Tight Bounds for\\ Classical and Quantum Coin Flipping}

\author{Esther H{\"a}nggi\inst{1} \and J{\"u}rg Wullschleger\inst{2}}

\authorrunning{H{\"a}nggi, Wullschleger.}  

\tocauthor{Esther H{\"a}nggi, J{\"u}rg Wullschleger}

\institute{Computer Science Department, ETH Zurich, Z{\"u}rich, Switzerland\\
\and
DIRO, Universit\'e de Montr\'eal, Quebec, Canada \\
McGill University, Quebec, Canada}

\maketitle

\begin{abstract}
\emph{Coin flipping} is a cryptographic primitive for which strictly better protocols exist if the players are 
not only allowed to exchange classical, but also quantum messages.
During the past few years, several results have appeared which give a tight bound on the range of 
implementable unconditionally secure coin flips, both in the classical as well as in the quantum setting 
and for both weak as well as strong coin flipping. But the picture is still incomplete: in the quantum setting, 
all results consider only protocols with \emph{perfect correctness}, 
 and in the classical setting tight bounds for strong coin flipping are still missing. \\
 We give a general definition of coin flipping which unifies the notion of strong and weak coin 
flipping (it contains both of them as special cases) and allows the honest players to abort with 
a certain probability.  We give tight bounds on the achievable range of parameters both in the classical 
and in the quantum setting.
\end{abstract}

\section{Introduction}

\emph{Coin flipping} (or coin tossing)  as a cryptographic primitive has been introduced by Blum~\cite{blum} 
and is  one of the basic building blocks of \emph{secure two-party computation}~\cite{Yao82}. 

Coin flipping can be defined in several ways. The most common definition, sometimes called \emph{strong coin 
flipping}, allows two honest players to receive a uniform random bit $c \in \{0,1\}$, such that a dishonest player cannot \emph{increase} the probability of any output. A dishonest player may, however, abort the protocol, in which case the honest player gets the erasure symbol $\Delta$ as output\footnote{
The dishonest player may abort after receiving the output bit, but before the honest player gets the output bit. 
This allows cases where the honest player gets, for example, $0$ with probability $1/2$ and $\Delta$ otherwise.
There exists also a definition of coin flipping where a dishonest player does not have this unfair advantage, and the honest player must always get a uniformly random bit, no matter what the other player does. See~\cite{cleve,LoChau98,MoNaSe09}.
}. 
A weaker definition, called \emph{weak coin flipping}, only requires that each party cannot increase the 
probability of their preferred value.

Without any additional assumptions, unconditionally secure weak coin flipping (and therefore also strong coin 
flipping) cannot be implemented by a classical protocol. This follows from a result by Hofheinz, M\"uller-Quade 
and Unruh~\cite{HoMQUn06}, which implies that if two honest players always receive the same uniform bit, then 
there always exists one player that can force the bit to be his preferred value with certainty.

If the players can communicate using a quantum channel, unconditionally secure coin flipping is possible 
to some extent. The bounds of the possibilities have been investigated by a long line of research.
Aharanov \emph{et\ al.}~\cite{ATVY00} presented a strong coin flipping protocol where no quantum adversary 
can force the outcome to a certain value with probability larger than $0.914$. This bound has been improved 
by Ambainis~\cite{Ambain01} and independently by Spekkens and Rudolph~\cite{SpeRud01} to $0.75$ (see also~\cite{Colbeck} for a different protocol). 
For weak coin flipping, Spekkens and Rudolph~\cite{SpeRud02} presented a protocol where the dishonest player cannot force the outcome to its preferred value with probability larger than $1/\sqrt{2} \approx 0.707$. (Independently, Kerenidis and Nayak~\cite{KerNay04} showed a slightly weaker bound of $0.739$.)
This bound has further been improved by Mochon, first to $0.692$~\cite{Mochon04} and finally to $1/2 + \eps$ for any constant $\eps > 0$~\cite{Mochon07}, therefore getting arbitrarily close to the theoretical optimum. For strong coin flipping, on the other hand, this is not possible, since it has been shown by Kitaev~\cite{kitaev} (see~\cite{abdr} for a proof) that for any quantum protocol there is always a player able to force an outcome with probability at least $1/\sqrt{2}$. Chailloux and Kerenidis~\cite{ChaKer09} showed that a bound
of $1/\sqrt{2} + \eps$ for any constant $\eps > 0$ can be achieved{, by combining two classical protocols with Mochon's result: They first showed that an \emph{unbalanced} weak coin flip can be implemented using many instances of weak coin flips, and then that one instance of an unbalanced weak coin flip suffices to implement a strong coin flip with optimal achievable bias.}

\subsection{Limits of previous Results}

In all previous work on quantum coin flipping, honest players are required to output a \emph{perfect} coin flip, 
i.e., the probability of both values has to be exactly $1/2$, and the players must never disagree on the output 
or abort. However, in practice the players may very well be willing to allow a small probability of error even 
if both of them are honest. Furthermore, a (quantum) physical implementation of any protocol will always contain 
some noise and, therefore, also some probability to disagree or abort. This requirement is, therefore, overly 
strict and raises the question how much the cheating probability can be reduced when allowing an error of 
the honest players.

It is well-known that there exist numerous cryptographic tasks where allowing an (arbitrarily small) error can greatly improve the performance of the protocol.
 For example, as shown in~\cite{BeiMal04}, the amount of secure AND gates (or, alternatively, oblivious transfers) needed between two parties to test equality of two strings is only $O(\log 1/\eps)$ for any small error $\eps > 0$, while it is exponential in the length of the inputs in the perfect case.
Considering reductions from oblivious transfer to different variants of oblivious transfer where the players can use quantum communication, it has recently been shown 
that introducing a small error can reduce the amount of oblivious transfer needed by an arbitrarily large factor~\cite{WinWul10}.

It can easily be seen that \emph{some} improvement on the achievable parameters must be possible also in the case of coin flipping: In any protocol, the honest players can simply flip the output bit with some probability. This increases the error, but decreases the bias. In the extreme case, the two players simply flip two independent coins and output this value. This prohibits any bias from the adversary, at the cost of making the players disagree with probability $1/2$.

The only bounds on coin flipping we are aware of allowing for an error of the honest players have been given in 
the classical setting. An impossibility result for weak coin flipping has been given in~\cite{HoMQUn06}. Nguyen, 
Frison, Phan Huy and Massar presented in~\cite{singlecoin} a slightly more general bound and a protocol that 
achieves the bound in some cases.

\subsection{Contribution}

We introduce a general definition of coin flipping, characterized by $6$ parameters, which we denote by
\begin{align}
\nonumber &\CF(p_{00},p_{11},p_{0*},p_{1*},p_{*0},p_{*1})\;.
\end{align}
The value $p_{ii}$ (where $i \in \{0,1\}$)  is the probability that two honest players output $i$ and the value $p_{*i}$ ($p_{i*}$) is the maximal probability that the first (second) player can force the honest player to output $i$. With probability $1 - p_{00} - p_{11}$, two honest players will abort the protocol and output a dummy symbol.\footnote{Similar to~\cite{HoMQUn06}, we can require that two honest players always output the same values, since the players can always add a final round to check if they have the same value and output the dummy symbol if the values differ.}\setcounter{reffoot}{\value{footnote}}
This new definition has two main advantages:
\begin{itemize}
\item It generalizes both weak and strong coin flipping, but also allows for additional types of coin flips which are unbalanced or lay somewhere between weak and strong. 
\item It allows two honest players to abort with some probability.
\end{itemize}

We will first consider classical protocols (Section~\ref{sec:classical}), and give tight bounds for all parameters. The impossibility result (Lemma~\ref{lemma:classic-imposs}) uses a similar proof technique as Theorem~7 in~\cite{HoMQUn06}. In combination with two protocols showing that this bound can be reached  (Lemma~\ref{lemma:classicalProtocol}), we obtain the following theorem. 
\begin{theorem} \label{thm:class}
Let $p_{00},p_{11},p_{0*},p_{1*},p_{*0},p_{*1} \in [0,1]$. There exists a classical protocol that implements an unconditionally secure 
$\CF(p_{00},p_{11},p_{0*},p_{1*},p_{*0},p_{*1})$ if and only if
\begin{align*}
 p_{00} &\leq p_{0*} p_{*0}\;, \\
 p_{11} &\leq p_{1*}p_{*1}\;, \ \text{and} \\
 p_{00} + p_{11} &\leq p_{0*} p_{*0} + p_{1*}p_{*1} -  \max(0,p_{0*} + p_{1*} -1) \max(0,p_{*0} + p_{*1} -1) \;.
\end{align*}
\end{theorem}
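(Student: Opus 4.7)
The theorem splits into an impossibility bound (Lemma \ref{lemma:classic-imposs}) and a matching protocol construction (Lemma \ref{lemma:classicalProtocol}). For the impossibility, I would model any classical protocol as a finite game tree with alternating Alice/Bob turns, and attach to each node $v$ four quantities: $\alpha_v,\beta_v$, the probabilities that two honest players continuing from $v$ output $0$ and $1$; $f_v^i$, the maximum over cheating Bob strategies from $v$ of the probability that honest Alice outputs $i$; and $g_v^i$, the analogous quantity for cheating Alice against honest Bob. These satisfy the expected recursions: at an Alice-node, $\alpha_v,\beta_v,f_v^i$ are $p_A$-weighted averages of the children's values while $g_v^i=\max_m g_{v_m}^i$; Bob-nodes are symmetric; at a leaf, all four collapse to the leaf's output indicator. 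The plan is to prove by bottom-up induction the two invariants (I1) $\alpha_v \leq f_v^0 g_v^0$ and $\beta_v \leq f_v^1 g_v^1$, and (I2) $\alpha_v + \beta_v \leq \phi(f_v^0,f_v^1,g_v^0,g_v^1)$, where $\phi(a,b,c,d) := ac+bd - \max(0,a+b-1)\max(0,c+d-1)$. Evaluated at the root this yields exactly the three inequalities of the theorem, since $\alpha = p_{00}$, $\beta = p_{11}$, $f^i = p_{i*}$, $g^i = p_{*i}$ there.

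Invariant (I1) is immediate from the IH: at an Alice-node, $\alpha_v = \sum_m p_A(m)\alpha_{v_m} \leq g_v^0 \sum_m p_A(m) f_{v_m}^0 = f_v^0 g_v^0$, using $g_{v_m}^0 \leq g_v^0$. For (I2) I would first establish two structural properties of $\phi$: coordinate-wise monotonicity on $[0,1]^4$, and concavity in the pair $(a,b)$ for any fixed $(c,d)$ and vice versa. Both follow by a case split on the signs of $a+b-1$ and $c+d-1$, the key point being that $-\max(0,a+b-1)$ is concave in $(a,b)$ and is multiplied by the non-negative coefficient $\max(0,c+d-1)$. The induction step at an Alice-node then chains as $\mathbb{E}_m[\phi_{v_m}] \leq \mathbb{E}_m\phi(f_{v_m}^0,f_{v_m}^1,g_v^0,g_v^1) \leq \phi(f_v^0,f_v^1,g_v^0,g_v^1)$, using monotonicity in $(c,d)$ to replace $g_{v_m}^i$ by the larger $g_v^i$ and then concavity in $(a,b)$ applied to the honest weights $p_A(m)$ to pull the average inside $\phi$; Bob-nodes are dual, with the roles of $(a,b)$ and $(c,d)$ swapped.

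For the existence side I would exhibit protocols achieving any feasible tuple, splitting into regimes. If $p_{0*}+p_{1*}\le 1$ and $p_{*0}+p_{*1}\le 1$, a one-round protocol suffices: Alice sends $a\in\{0,1,\Delta\}$ with probabilities $(p_{0*},p_{1*},1-p_{0*}-p_{1*})$, Bob independently sends $b$ with $(p_{*0},p_{*1},1-p_{*0}-p_{*1})$, and both output $a$ if $a = b \in \{0,1\}$ and $\Delta$ otherwise; this attains all three bounds with equality since the $\max\cdot\max$ term vanishes. If only one sum exceeds $1$, a two-round variant in which the ``small-sum'' player commits first as above and the other echoes the committed value with probability $p_{*a}$ (resp.\ $p_{a*}$) or aborts extends the reachable set while preserving tightness on the first two inequalities. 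The overlap regime where both sums exceed $1$ requires a more delicate protocol combining the two bases; I expect the main obstacle to be tuning the parameters in this regime so that the excess cheating powers on each side are precisely balanced against the cross-term $\max(0,p_{0*}+p_{1*}-1)\max(0,p_{*0}+p_{*1}-1)$, with the third inequality becoming the binding constraint. A secondary delicate point in the impossibility is the case analysis at the kinks of the $\max(0,\cdot)$ regime boundaries when verifying the monotonicity and concavity of $\phi$.
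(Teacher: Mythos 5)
Your impossibility argument is essentially the paper's proof of Lemma~\ref{lemma:classic-imposs}: the same transcript-tree induction, the same four quantities attached to each node, and the same function $f(a,b,c,d)=ac+bd-\max(0,a+b-1)\max(0,c+d-1)$ with the same two structural facts (coordinate-wise monotonicity in the ``opponent'' pair and concavity in the pair being averaged), so that half is sound. One point you elide: you must justify that honest continuation behaviour and optimal cheating probabilities are functions of the transcript alone, and that the output is determined by the transcript; the paper handles this by appending a final round in which both players announce their outputs and by invoking Lemma~7 of~\cite{HoMQUn06} to remove private state.

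The genuine gap is in the existence direction. Your constructions cover exactly the regime where $\max(0,p_{0*}+p_{1*}-1)\max(0,p_{*0}+p_{*1}-1)=0$ (where they coincide with the paper's Protocol \texttt{CoinFlip1}), but for the regime $p_{0*}+p_{1*}>1$ \emph{and} $p_{*0}+p_{*1}>1$ you only state that you ``expect'' a suitably tuned protocol to exist. That regime is the substantive part of the theorem --- it contains, for instance, all near-strong coin flips --- and the matching construction is not a routine blend of your two base protocols. The paper's Protocol \texttt{CoinFlip2} is a three-message protocol (Alice sends a bit $a$ that is $0$ with probability $p$, Bob keeps it with probability $x_a=p_{*a}$ or flips it, and if he flipped it Alice accepts the new value with probability $y_b$ or aborts), with $p:=\frac{p_{00}-p_{0*}+p_{0*}p_{*1}}{p_{*0}+p_{*1}-1}$, $y_0:=\frac{p_{0*}-p}{1-p}$ and $y_1:=\frac{p_{1*}+p-1}{p}$; one must verify $p\in[1-p_{1*},\,p_{0*}]$ so that all parameters are probabilities, and then check that the honest output probabilities and all four cheating probabilities come out exactly right, a computation that uses the equality case of the third inequality in an essential way. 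Without this construction (or an equivalent one), and without the downgrading step of Lemma~\ref{lem:chelp} that converts the equality-achieving protocols into ones for arbitrary smaller $p_{00},p_{11}$, the ``if'' direction of the theorem remains unproved precisely where it is interesting.
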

For weak coin flipping, i.e.,  $p_{*1} = 1$ and $p_{0*}=1$, the bound of Theorem~\ref{thm:class} simplifies to the condition that $p_{00}\leq p_{*0}$, $p_{11}\leq p_{1*}$, and
\begin{align}
\nonumber  1 - p_{00} - p_{11} &\geq (1 - p_{*0}) (1- p_{1*} ) \;,
\end{align}
which is the bound that is also implied by Theorem~7 in~\cite{HoMQUn06}. 

In Section~\ref{sec:quantum}, we consider the quantum case, and give tight bounds for all parameters. The quantum protocol (Lemma~\ref{lemma:quantum-imp}) bases on one of the protocols presented in~\cite{ChaKer09}, and is a classical protocol that uses an unbalanced quantum weak coin flip as a resource. The impossibility result follows from the proof of Kitaev's bound on quantum strong coin flipping (Lemma~\ref{lemma:quant-imposs}). 
\begin{theorem}  \label{thm:quant}
Let $p_{00},p_{11},p_{0*},p_{1*},p_{*0},p_{*1} \in [0,1]$. For any $\eps > 0$, there exists a quantum protocol 
that implements an unconditionally secure 
$\CF(p_{00}, p_{11}, p_{0*} + \eps ,p_{1*} + \eps ,p_{*0}+\eps,p_{*1}+\eps)$ if 
\begin{align*}
 p_{00} &\leq p_{0*} p_{*0}\;, \\
 p_{11} &\leq p_{1*}p_{*1}\;, \ \text{and} \\
 p_{00} + p_{11} &\leq 1 \;.
\end{align*}
If these bounds are not satisfied then there does not exist a quantum protocol for $\eps = 0$.
\end{theorem}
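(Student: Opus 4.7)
The plan is to prove the theorem in two independent directions, matching the structure of Theorem~\ref{thm:class} but invoking the quantum lemmas in place of their classical analogues. For the impossibility direction, the bound $p_{00}+p_{11}\le 1$ is immediate, since the events ``both honest players output $0$'' and ``both honest players output $1$'' are disjoint. For the two product inequalities, suppose toward contradiction that $p_{ii}>p_{i*}p_{*i}$ for some $i\in\{0,1\}$ and some candidate protocol with $\eps=0$. I would then invoke Lemma~\ref{lemma:quant-imposs}, the per-outcome form of Kitaev's fidelity bound: in any quantum two-party protocol, the probability that two honest players jointly output $i$ is at most the product of the one-sided cheating probabilities for forcing output $i$, i.e.\ $p_{ii}\le p_{i*}p_{*i}$. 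This contradicts the assumption and rules out such a protocol.

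For the achievability direction, given parameters satisfying the three inequalities, I would follow the Chailloux--Kerenidis blueprint and produce, for every $\eps>0$, a classical wrapper that calls an unbalanced quantum weak coin flip as a subroutine. Lemma~\ref{lemma:quantum-imp}, which rests on Mochon's nearly optimal weak coin flipping, supplies such subroutines with arbitrary unbalanced biases while losing at most an additive $\eps$ in each cheating probability. The wrapper is designed so that, if the subroutine were an ideal weak coin flip, the honest outputs would occur with probabilities $p_{00}$ and $p_{11}$ (and abort with probability $1-p_{00}-p_{11}$), and the induced strong cheating probabilities would match $p_{0*},p_{1*},p_{*0},p_{*1}$ exactly. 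Replacing the ideal subroutine with Mochon's approximation then shifts each of the four cheating bounds by at most $\eps$, yielding $\CF(p_{00},p_{11},p_{0*}+\eps,p_{1*}+\eps,p_{*0}+\eps,p_{*1}+\eps)$.

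The main obstacle is precisely the difference between the classical and quantum parameter regions: Theorem~\ref{thm:class} carries the extra subtracted term $\max(0,p_{0*}+p_{1*}-1)\max(0,p_{*0}+p_{*1}-1)$, and the quantum proof must explain why this term disappears. The essential observation is that a single classical randomness source couples the cheating opportunities for the two output values, whereas an unbalanced weak coin flip with nearly zero bias, available via Mochon's protocol, decouples them and allows the constraints $p_{00}\le p_{0*}p_{*0}$ and $p_{11}\le p_{1*}p_{*1}$ to be met independently. This is the step where the quantum advantage is spent; verifying that Mochon's additive $\eps$ in the weak coin flip propagates as additive $\eps$ in each of the four strong cheating probabilities is a routine but careful bookkeeping exercise.
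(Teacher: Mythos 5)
Your overall architecture matches the paper's: impossibility via Kitaev's bound (Lemma~\ref{lemma:quant-imposs}) for the two product inequalities plus the trivial disjointness bound $p_{00}+p_{11}\le 1$, and achievability via a classical wrapper around Mochon's weak coin flip in the style of Chailloux--Kerenidis. The impossibility half is fine. The genuine gap is in the achievability half: you write that ``the wrapper is designed so that'' the honest output distribution and all four cheating probabilities come out right, but that design is the entire content of the proof and you never give it. In the paper this is Protocols \texttt{QCoinFlip1} and \texttt{QCoinFlip2} (Lemmas~\ref{lem:q-noErr} and~\ref{lem:q-Err}): one must exhibit the explicit parameters $x=(p_{0*}p_{*0}+p_{*1}-1)/(p_{*0}+p_{*1}-1)$, $z_0=(p_{*0}+p_{*1}-1)/p_{*1}$, $z_1=(p_{*0}+p_{*1}-1)/p_{*0}$, $p_i=1-p_{*1-i}$, verify that they lie in $[0,1]$, compute the honest and cheating distributions, and then handle $p_{0*}p_{*0}+p_{1*}p_{*1}<1$ by first solving the boundary case $p_{0*}p_{*0}+p_{1*}p_{*1}=1$ exactly and letting the honest players abort with calibrated probabilities $\eps_0,\eps_1$; Lemma~\ref{lem:chelp} then lowers $p_{00},p_{11}$ to any smaller admissible values. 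None of this is routine bookkeeping --- it is where the theorem is actually proved, and your ``decoupling'' intuition, while correct, does not substitute for it.

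Two further points. First, you cite Lemma~\ref{lemma:quantum-imp} as the source of the unbalanced weak-coin-flip subroutines, but that lemma \emph{is} the achievability statement you are trying to establish; the subroutines come from Theorem~\ref{thm:wcf12}, Proposition~\ref{lem:wcf1} and Lemma~\ref{lem:wcf2}, with Lemma~\ref{lem:shakeWCF} controlling how the weak-coin-flip error propagates into the final cheating probabilities. Second, the wrapper cannot work uniformly over the whole parameter region: when $p_{0*}+p_{1*}\le 1$ or $p_{*0}+p_{*1}\le 1$ the parameters above leave $[0,1]$, and the paper switches to the purely classical Protocol \texttt{CoinFlip1} (Lemma~\ref{lem:protc1}), which already attains the quantum bound in that regime because the subtracted term in Theorem~\ref{thm:class} vanishes there. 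Your proposal needs this case split to be complete.
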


\begin{figure}
\centering
{ \input{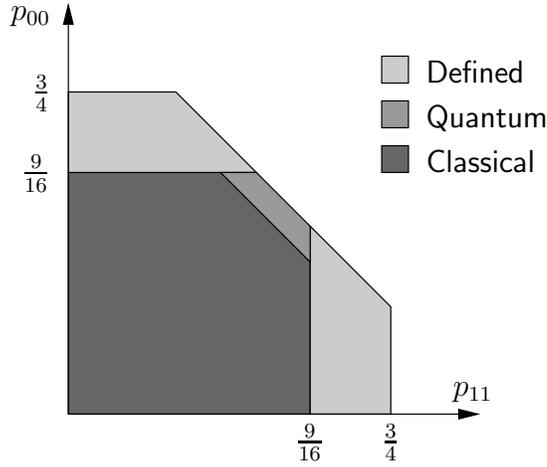} }
\caption{\label{fig:fig1}
For values $p_{0*} = p_{*0} = p_{1*} = p_{*1} = 3/4$, this figure shows the achievable values of $p_{00}$ and 
$p_{11}$ in the classical and the quantum setting. The light grey area is the set of all coin flips that can be 
defined. (See Definition \ref{def:coinflip}.)
}
\end{figure}

Our results, therefore, give the exact trade-off between weak vs.\ strong coin flipping, between bias vs.\ abort-probability, and between classical vs.\ quantum coin flipping. (Some of these trade-offs are shown in Figures~\ref{fig:fig1} and~\ref{fig:fig2}.)
They imply, in particular, that quantum protocols can achieve strictly better bounds if $p_{0*} + p_{1*} > 1$ and $p_{*0} + p_{*1} >1$. Outside this range classical protocols attain the same bounds as quantum protocols.

Since the optimal quantum protocol is a classical protocol using quantum weak coin flips as a resource, the possibility to do weak coin flipping, as shown by Mochon~\cite{Mochon07}, can be seen as the crucial difference between the classical and the quantum case.

\begin{figure}
\centering
 \input{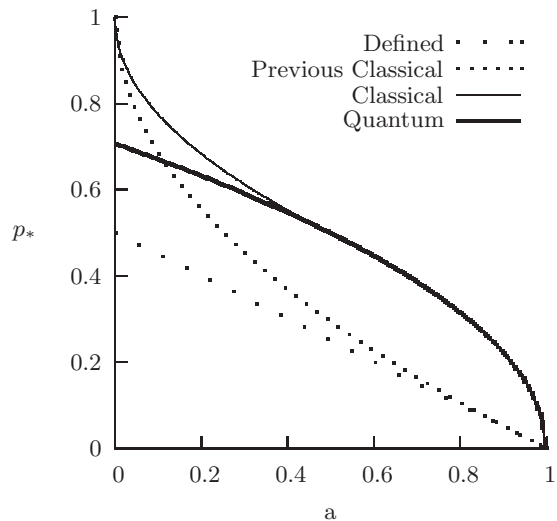}
\caption{\label{fig:fig2}
This graph shows the optimal bounds for symmetric coin flipping of the form $\CF((1 - a)/{2},(1 - a)/{2},p_{*},p_{*},p_{*},p_{*})$. The value $p_{*}$ is the maximal probability that any player can force the coin to be a certain value, and $a$ is the abort probability. Therefore, the smaller $p_{*}$ for a fixed value of $a$, the better is the protocol. The definition of coin flipping (Definition \ref{def:coinflip}) implies that $p_{*} \geq (1-a)/{2}$. Hence, the theoretically optimal bound is $p_{*} = (1-a)/{2}$.
In the quantum case, the optimal achievable bound is $p_{*} = \sqrt{(1-a)/2}$. In the classical case the optimal achievable bound is $p_{*} = 1 - \sqrt{a/2}$ for $a < 1/2$ and the same as the quantum bound for $a \geq 1/2$. The best previously known classical lower bounds from \cite{HoMQUn06,singlecoin} was $p_{*} \geq 1 - \sqrt{a}$.
}
\end{figure}

\section{Preliminaries}

In a classical protocol, the two players (Alice and Bob) are restricted to classical communication. Both players are given unlimited computing power and memory, and are able to locally sample random variables from any distribution. In a quantum protocol, the two players may exchange quantum messages. They have unlimited quantum memory and can perform any quantum computation on it. All operations are noiseless. At the beginning of the protocol, the players do not share any randomness or entanglement. While honest players have to follow the protocol, we do not make any assumption about the behaviour of the malicious players. We assume that the adversary is static, i.e., any malicious player is malicious from the beginning. Furthermore, we require that the protocol has a finite number of rounds.

\begin{definition}\label{def:coinflip}
Let $p_{00},p_{11},p_{0*},p_{1*},p_{*0},p_{*1} \in [0,1]$, such that $p_{00}+p_{11} \leq 1$, $p_{00} \leq \min \{p_{0*},p_{*0}\}$ and $p_{11} \leq \min \{p_{1*},p_{*1}\}$ holds.\footnote{The last two conditions are implied by the fact that a dishonest player can always behave honestly. Hence, he can always bias the coin to $i \in \{0,1\}$ with probability $p_{ii}$.} A protocol implements a $\CF(p_{00},p_{11},p_{0*},p_{1*},p_{*0},p_{*1})$, if the following conditions are satisfied:
\begin{itemize}
\item If both players are honest, then they output the same value $i \in \{0,1\}$ with probability $p_{ii}$ and $\Delta$ with probability $1 - p_{00} - p_{11}$.\footnotemark[\value{reffoot}]
\item For any dishonest Alice, the probability that Bob outputs $0$ is at most $p_{*0}$, and the probability that he outputs $1$ is at most $p_{*1}$.
\item For any dishonest Bob, the probability that Alice outputs $0$ is at most $p_{0*}$, and the probability that she outputs $1$ is at most $p_{1*}$.
\end{itemize}
\end{definition}

Definition~\ref{def:coinflip} generalizes the notion of both weak and strong coin flips and encompasses, in fact, the different definitions given in the literature. 
\begin{itemize}
\item A perfect weak coin flip is a
\begin{align}
\nonumber &\CF\left(\frac12,\frac12,1,\frac12,\frac12,1\right)\;.
\end{align}
\item A perfect strong coin flip is a
\begin{align}
\nonumber &\CF\left(\frac12,\frac12,\frac12,\frac12,\frac12,\frac12\right)\;.
\end{align}
\item
The weak coin flip with error $\eps > 0$ of~\cite{Mochon07} is a
\begin{align}
\nonumber &\CF\left(\frac12,\frac12,1,\frac12+\eps,\frac12+\eps,1\right)\;.
\end{align}
\item The unbalanced weak coin flip $\WCF(z, \eps)$ of~\cite{ChaKer09} is a
\begin{align}
\nonumber &\CF\left(z,1-z,1,1-z+\eps,z+\eps,1\right)\;.
\end{align}
\item The strong coin flip of~\cite{ChaKer09} is a 
\begin{align}
\nonumber &\CF\left(\frac12,\frac12,\frac {1} {\sqrt{2}} + \eps,\frac 1 {\sqrt{2}} + \eps,\frac 1 {\sqrt{2}} + \eps,\frac 1 {\sqrt{2}} + \eps\right)\;.
\end{align}
\end{itemize}

Note that $\CF(p_{00},p_{11},p_{0*},p_{1*},p_{*0},p_{*1})$ can also be defined as an ideal functionality that is equivalent to the above definition. Such a functionality would look like this: If there is any corrupted player, then the functionality first asks him to send a bit $b \in \{0,1\}$ that indicates which value he prefers. The functionality then flips a coin $c \in \{0,1,\Delta\}$, where the probabilities depend on $b$ and on which player is corrupted. For example, if the first player is corrupted and $b=0$, then $c=0$ will be chosen with probability $p_{*0}$, $c=1$ with probability $\min(p_{*1},1-p_{*0})$ and $\Delta$ otherwise. The functionality then sends $c$ to the adversary, and the adversary chooses whether he wants to abort the protocol or not. If he does not abort, the honest player receives $c$ (which might already be $\Delta$), and $\Delta$ otherwise.
If none of the players are corrupted, the functionality chooses a value $c \in \{0,1,\Delta\}$ which takes on $i \in \{0,1\}$ with probability $p_{ii}$ and sends $c$ to the two players.

\section{Classical Coin Flipping} \label{sec:classical}

\subsection{Protocols}

\begin{framed}
\noindent
Protocol \texttt{CoinFlip1}:\\
Parameters: $p_{0*},p_{1*},p_{*0},p_{*1} \in [0,1]$, $p_{0*} + p_{1*} \leq 1$.
\begin{enumerate}
 \item Alice flips a three-valued coin $a$ such that the probability that $a=i$ is $p_{i*}$ for $i=\{0,1\}$, and $a=\Delta$ otherwise. She sends $a$ to Bob. 
 \item If $a = \Delta$, Bob outputs $b=\Delta$. 
If $a \neq \Delta$, Bob flips a coin $b$ such that $b=a$ with probability $p_{*a}$
and $b=\Delta$ otherwise. Bob sends $b$ to Alice and outputs $b$.
\item If $b=a$ Alice outputs $b$, otherwise $\Delta$. 
\end{enumerate}
\end{framed}

\begin{lemma} \label{lem:protc1}
If either $p_{0*} + p_{1*} \leq 1$ or $p_{*0} + p_{*1} \leq 1$, then there exists a classical coin-flipping protocol with $p_{00} =  p_{0*} p_{*0}$ and $p_{11} = p_{1*}p_{*1}$.
\end{lemma}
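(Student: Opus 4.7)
The plan is to argue that Protocol \texttt{CoinFlip1} already does the job when $p_{0*}+p_{1*}\leq 1$, and to obtain the other case by symmetry. First I would check that the protocol is well-defined: the condition $p_{0*}+p_{1*}\leq 1$ ensures that the three-valued distribution Alice samples in step~1 is a valid probability distribution, and the probabilities $p_{*a}$ used by Bob in step~2 lie in $[0,1]$.

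Next I would verify the three conditions of Definition~\ref{def:coinflip} in order. For \emph{correctness}, if both players are honest then $\Pr[a=i]=p_{i*}$ for $i\in\{0,1\}$ and, conditioned on this, $\Pr[b=i\mid a=i]=p_{*i}$, so both players output $i$ with probability $p_{i*}p_{*i}$. The remaining mass $1-p_{0*}p_{*0}-p_{1*}p_{*1}$ corresponds to outputs $\Delta$, and by construction the two honest players always agree. For \emph{security against dishonest Alice}, the key observation is that Bob's only way to output $i\in\{0,1\}$ is to first receive the symbol $i$ from Alice and then, \emph{independently}, have his private coin come up $i$ with probability $p_{*i}$. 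Since this second step happens with probability exactly $p_{*i}$ regardless of Alice's strategy, the overall probability that Bob outputs $i$ is at most $\Pr[a=i]\cdot p_{*i}\leq p_{*i}$. For \emph{security against dishonest Bob}, Alice outputs $i\in\{0,1\}$ only if she drew $a=i$ in step~1, an event of probability $p_{i*}$ that is fixed before Bob acts; hence the probability Alice outputs $i$ is at most $p_{i*}$.

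Finally, for the case $p_{*0}+p_{*1}\leq 1$ I would simply swap the roles: run the same protocol with Bob playing the part of Alice using the parameters $p_{*0},p_{*1}$, and Alice playing Bob's part using $p_{0*},p_{1*}$. The correctness probability becomes $p_{*i}p_{i*}=p_{ii}$, and the bounds on the cheating probabilities follow from the same argument with the roles exchanged.

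I do not anticipate a genuine obstacle; the only subtle point worth highlighting is that in the bias analysis for a dishonest player, one must exploit the fact that the honest player's randomness contributing to the output $i$ is sampled \emph{independently} of the adversary's view at the moment the decision is made, which is what allows the naive bound $p_{*i}$ (respectively $p_{i*}$) rather than a larger one.
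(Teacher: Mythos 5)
Your proposal is correct and follows essentially the same route as the paper: run Protocol \texttt{CoinFlip1} when $p_{0*}+p_{1*}\leq 1$, swap the roles of Alice and Bob otherwise, and observe that the responder's output of $i$ requires first receiving $i$ and then an independent private coin landing on $i$ with probability $p_{*i}$ (resp.\ $p_{i*}$), while the initiator outputs $i$ only if she drew $i$ herself. The paper's proof states these facts more tersely; you merely spell out the same verification in more detail.
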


\begin{proof}
If $p_{0*} + p_{1*} \leq 1$, they use Protocol \texttt{CoinFlip1}. (If $p_{*0} + p_{*1} \leq 1$, they exchange the role of Alice and Bob.)
By construction, a malicious Bob cannot bias Alice's output by more than $p_{i*}$, and a malicious Alice cannot bias Bob's output by more than
$p_{*i}$. 
Honest players output the value
$0$ with probability
$p_{0*} p_{*0}$ and $1$ with probability $p_{1*}p_{*1}$.
\qed
\end{proof}

\begin{framed}
\noindent
Protocol \texttt{CoinFlip2}:\\
Parameters: $p,x_0,x_1,y_0,y_1 \in [0,1]$.
\begin{enumerate}
 \item Alice flips a coin $a \in \{0,1\}$ such that $a=0$ with probability $p$ and sends it to Bob.
 \item Bob receives the coin $a$ and flips a coin $b \in \{0,1\}$ such that the probability that $b=a$ is $x_a$. He sends $b$ to Alice. If $b=a$ he outputs $b$. 
 \item If $b = a$, then Alice outputs $b$. If $a \neq b$, then Alice flips a coin $c$, such that with probability $y_b$, $c=b$ and else $c=\Delta$. She sends $c$ to Bob and outputs it. 
\item If $c=b$ Bob outputs $c$, else $\Delta$.
\end{enumerate}
\end{framed}

\begin{lemma} \label{lem:protc2}
If $p_{0*} + p_{1*} > 1$, $p_{*0} + p_{*1} > 1$, $p_{00} \leq p_{0*} p_{*0}$, $p_{11} \leq p_{1*} p_{*1}$, and 
\begin{eqnarray}
\label{eq:clcoin} p_{00} + p_{11} &=&  p_{0*} p_{*0} + p_{1*}p_{*1} - (p_{0*}+p_{1*}-1)(p_{*0} + p_{*1}-1)
\end{eqnarray}
then there exists a classical protocol implementing $\CF(p_{00},p_{11},p_{0*},p_{1*},p_{*0},p_{*1})$.
\end{lemma}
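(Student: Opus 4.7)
The plan is to analyse Protocol \texttt{CoinFlip2} by computing the honest output probabilities and the optimal cheating probabilities on each side as explicit functions of the five parameters $p,x_0,x_1,y_0,y_1$, and then to choose those parameters so that all six target values are attained simultaneously.

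First I would determine the cheating bounds.  For a malicious Alice who picks some $a \in \{0,1\}$, if Bob's coin lands $b=a$ (probability $x_a$), Bob outputs $a$; if $b\neq a$, then Alice controls $c$ but Bob only ever outputs $b$ or $\Delta$ in that branch.  So by sending $a=i$ she forces Bob's output to $i$ with probability exactly $x_i$, and to $1-i$ with probability at most $1-x_i$.  A mixed strategy is a convex combination, so her optimal probability of forcing output $0$ is $\max(x_0,1-x_1)$ and of forcing $1$ is $\max(x_1,1-x_0)$.  For a malicious Bob who has just seen $a$, sending $b=a$ forces output $a$, while sending $b\neq a$ yields output $b$ with probability $y_b$; hence his optimal probabilities of forcing Alice's output to $0$ and $1$ are $p+(1-p)y_0$ and $(1-p)+py_1$.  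The honest output probabilities are $p\,x_0+(1-p)(1-x_1)y_0$ for value $0$ and $(1-p)x_1+p(1-x_0)y_1$ for value $1$.

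Next I would solve for the parameters.  Setting $x_0=p_{*0}$ and $x_1=p_{*1}$ makes the Alice-cheating bounds tight (since $p_{*0}+p_{*1}>1$ implies $p_{*0}\geq 1-p_{*1}$ and symmetrically).  Setting $y_0=(p_{0*}-p)/(1-p)$ and $y_1=(p_{1*}+p-1)/p$ makes Bob's cheating bounds tight.  These $y$'s lie in $[0,1]$ exactly when $p\in[1-p_{1*},\,p_{0*}]$, a nonempty interval because $p_{0*}+p_{1*}>1$.  Substituting into the expression for the honest probability of $0$ yields
\[
p_{00}^{\text{prot}} \;=\; p(p_{*0}+p_{*1}-1)+p_{0*}(1-p_{*1}),
\]
which is an invertible affine function of $p$ on the above interval.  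So I would \emph{define} $p$ by requiring $p_{00}^{\text{prot}}=p_{00}$, i.e.\ $p=(p_{00}-p_{0*}(1-p_{*1}))/(p_{*0}+p_{*1}-1)$, and verify that the hypotheses $p_{00}\leq p_{0*}p_{*0}$ and $p_{11}\leq p_{1*}p_{*1}$ (together with the sum equation) force this $p$ into $[1-p_{1*},p_{0*}]$; the relevant reductions are $(1-p_{*1})(p_{0*}+p_{1*}-1)\geq 0$ and $(1-p_{1*})(p_{*0}+p_{*1}-1)\geq 0$, both obvious.

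Finally I would observe that a direct expansion gives $p_{00}^{\text{prot}}+p_{11}^{\text{prot}} = p_{0*}p_{*0}+p_{1*}p_{*1}-(p_{0*}+p_{1*}-1)(p_{*0}+p_{*1}-1)$, which is \emph{independent of $p$} and, by hypothesis~(\ref{eq:clcoin}), equals $p_{00}+p_{11}$; hence matching $p_{00}$ automatically forces $p_{11}^{\text{prot}}=p_{11}$.  The main (but still routine) obstacle is the bookkeeping of this algebraic identity and checking that the feasible interval for $p$ is indeed nonempty under the stated hypotheses; everything else is a straightforward case analysis of the two players' strategies.
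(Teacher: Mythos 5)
Your proposal is correct and follows essentially the same route as the paper: it uses Protocol \texttt{CoinFlip2} with the identical parameter choices $x_i=p_{*i}$, $y_0=(p_{0*}-p)/(1-p)$, $y_1=(p_{1*}+p-1)/p$ and the same value of $p$, verifies $p\in[1-p_{1*},p_{0*}]$ from the hypotheses, and analyses the same cheating strategies. The one small (and pleasant) shortcut is your observation that $p_{00}^{\text{prot}}+p_{11}^{\text{prot}}$ is independent of $p$, so that matching $p_{00}$ forces $p_{11}$ via equation~(\ref{eq:clcoin}) without the paper's direct computation of the probability of output $1$; just note that your stated ``relevant reductions'' for the interval check should really be the hypotheses $p_{00}\leq p_{0*}p_{*0}$ and $p_{11}\leq p_{1*}p_{*1}$ themselves.
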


\begin{proof}
We use Protocol \texttt{CoinFlip2} and choose the parameters
\begin{align}
\nonumber x_i := p_{*i} \;, \quad 
y_0 := \frac{ p_{0*} - p}{1 - p}\;, \quad y_1 := \frac{p_{1*} + p - 1}{p}\;, \quad
p := \frac{p_{00} - p_{0*} + p_{0*}p_{*1}}{p_{*0} + p_{*1} - 1} \;.
\end{align}
Note that if $p=1$ then $y_0$ is undefined (and the same holds for 
$y_1$ if $p=0$), but this does not cause any problem since in this case the parameter 
$y_0$ is never used in the protocol. 

We now verify that these parameters are between $0$ and $1$. 
We have $y_0,y_1 \in [0,1]$, if $p \in [1 - p_{1*},p_{0*}]$. To see that $p$ lies indeed 
in this interval, note that the 
upper bound follows from
\begin{align}
\nonumber 
 p = \frac{p_{00} - p_{0*} + p_{0*}p_{*1}}{p_{*0} + p_{*1} - 1} \leq \frac{p_{0*}p_{*0} - p_{0*} + p_{0*}p_{*1}}{p_{*0} + p_{*1} - 1} = \frac{p_{0*} ( p_{*0} + p_{*1} - 1)}{p_{*0} + p_{*1} - 1} = p_{0*}\;.
\end{align}
For the lower bound, note that 
\begin{align}
\nonumber 1 - p
&= \frac{p_{*0} + p_{*1} - 1}{p_{*0} + p_{*1} - 1} - \frac{p_{00} -
p_{0*} + p_{0*} p_{*1}}{p_{*0} + p_{*1} - 1} \\
\nonumber&= \frac{p_{*0} + p_{*1} - 1 - p_{00} + p_{0*} - p_{0*} p_{*1}}{p_{*0}
+ p_{*1} - 1} \\
\label{eq:1minusp} &= \frac{p_{1*}p_{*0} -p_{1*}   + p_{11}}{p_{*0} + p_{*1} - 1}\;,
\end{align}
where we have used that
\begin{align*}
& p_{*0} + p_{*1} - 1 - p_{00} + p_{0*} - p_{0*} p_{*1} \\
 &\stackrel{(\ref{eq:clcoin})}{=} p_{*0} + p_{*1} - 1 - (p_{0*} p_{*0} + p_{1*}p_{*1} -
(p_{0*}+p_{1*}-1)(p_{*0} + p_{*1}-1) - p_{11})\\
\nonumber &\quad  + p_{0*} - p_{0*}
p_{*1} \\
&= p_{*0} + p_{*1} - 1 - p_{0*} p_{*0} - p_{1*}p_{*1} +
p_{0*}p_{*0} +p_{0*}p_{*1}- p_{0*} +p_{1*}p_{*0} +p_{1*}p_{*1}\\
 & \qquad
-p_{1*} -p_{*0} -p_{*1} +1   + p_{11}  + p_{0*} - p_{0*} p_{*1} \\
&= p_{1*}p_{*0} -p_{1*}   + p_{11}\;.
\end{align*}
Therefore
\begin{align*}
  p = 1 -  \frac{p_{11} - p_{1*} + p_{1*} p_{*0}}{p_{*0} + p_{*1} - 1} \geq
 1 -  \frac{p_{*1}p_{1*} - p_{1*} + p_{1*} p_{*0}}{p_{*0} + p_{*1} - 1} = 1 - p_{1*}\;.
\end{align*}
It follows that $p,x_0,x_1,y_0,y_1 \in [0,1]$.

If both players are honest, then the probability that they both output $0$ is
\begin{align*}
 p x_0 + (1-p) (1-x_1)y_0 
&= p x_0 + (1-p) (1-x_1) \frac{ p_{0*} - p}{1 - p}  \\
&= p p_{*0} +  (1-p_{*1}) ( p_{0*} - p)  \\
&= p p_{*0} - p (1-p_{*1}) +  p_{0*} (1-p_{*1}) \\
&= \frac{p_{00} - p_{0*} + p_{0*} p_{*1}}{p_{*0} + p_{*1} - 1} (p_{*0}+p_{*1}- 1) +  p_{0*} (1-p_{*1}) \\
&= p_{00}\;.
\end{align*}
The probability that they both output $1$ is
\begin{align*}
 p (1-x_0) y_1 + (1-p) x_1
& = p (1-p_{*0}) \frac{p_{1*} + p - 1}{p} + (1-p) p_{*1} \\
& = (1-p_{*0}) (p_{1*} + p - 1) + (1-p) p_{*1} \\
& =  p_{1*} (1-p_{*0}) - (1-p) (1-p_{*0}) + (1-p) p_{*1} \\
& =  p_{1*} (1-p_{*0}) + (1-p) (p_{*1} + p_{*0} - 1) \\
&\stackrel{(\ref{eq:1minusp})}{=}  p_{1*} (1-p_{*0}) + \frac{p_{1*}p_{*0} -p_{1*}   + p_{11}}{p_{*0} + p_{*1} - 1} 
 (p_{*1} + p_{*0} - 1) \\
& = p_{11}\;.
\end{align*}

If Alice is malicious, she can bias Bob to output value $i$ either by sending $i$ as first message hoping that Bob does not change the value, which has probability $x_i = p_{*i}$; or by sending the value $1-i$ hoping that Bob changes the value, which occurs with probability $1 - x_{1-i} = 1 - p_{*1-i} \leq p_{*i}$. Hence, she succeeds with probability $p_{*i}$.

Bob can bias Alice to output value $i$ by sending $b=i$ independently of what Alice had sent as first message. For $i=0$, Alice will accept this value with probability 
\begin{align*}
p + (1-p) y_0 = p + (1-p) \frac{ p_{0*} - p}{1 - p} = p_{0*}
\end{align*}
and for $i=1$ with probability
\begin{equation} 
1 - p + p y_1 = 1 -p + p \frac{p_{1*} + p - 1}{p} = p_{1*}\;.
\end{equation}
\qed
\end{proof}

In order to show that all values with $p_{00}+p_{11}$ below the bound given in (\ref{eq:clcoin}) can be reached, we will need additionally the following lemma. 
\begin{lemma} \label{lem:chelp}
If there exists a protocol~$P$ that implements a coin flip \linebreak[4] $\CF(p_{00},p_{11},p_{0*},p_{1*},p_{*0},p_{*1})$, then, for any $p'_{00} \leq p_{00}$ and $p'_{11} \leq p_{11}$, there exists a protocol $P'$ that implements a coin flip $\CF(p'_{00},p'_{11},p_{0*},p_{1*},p_{*0},p_{*1})$.
\end{lemma}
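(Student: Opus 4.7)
The plan is to construct $P'$ from $P$ by appending a single post-processing step in which one designated player (say Alice) probabilistically ``downgrades'' her output to $\Delta$ and announces the downgrade to Bob. Concretely, set $\alpha_i := p'_{ii}/p_{ii}$ whenever $p_{ii}>0$ (and note $\alpha_i \in [0,1]$ by the hypothesis $p'_{ii}\leq p_{ii}$); if $p_{ii}=0$ then $p'_{ii}=0$ and the value of $\alpha_i$ is irrelevant. In $P'$, the two players first execute $P$; let $a \in \{0,1,\Delta\}$ be Alice's output. If $a \in \{0,1\}$, Alice flips an independent coin and sends ``keep'' to Bob with probability $\alpha_a$, otherwise she sends ``reject''; if $a = \Delta$ she sends ``reject''. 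Alice outputs $a$ upon ``keep'' and $\Delta$ upon ``reject''; Bob outputs his own $P$-output upon ``keep'' and $\Delta$ upon ``reject''.

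The honest-case parameters come out immediately: if both players follow $P'$, then by correctness of $P$ they already share the same output $i \in \{0,1\}$ with probability $p_{ii}$. Conditioned on this, both output $i$ in $P'$ precisely when Alice's fresh coin lands on ``keep'', which happens with probability $\alpha_i$. Hence the joint probability of outputting $i$ is $p_{ii}\cdot\alpha_i = p'_{ii}$, and the residual probability goes to $\Delta$.

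For the cheating bounds I would verify each side separately. Against any dishonest Bob, Alice's output in $P'$ equals $i$ only if her $P$-output was $i$ and her independent coin returned ``keep''; since the coin is drawn independently of Bob's strategy, the probability is at most $p_{i*}\cdot\alpha_i \leq p_{i*}$. Against any dishonest Alice, Bob's output in $P'$ is $i$ only if his $P$-output was $i$ and Alice chose to send ``keep''; upper bounding ``keep'' by $1$, the probability is at most the probability that Bob outputs $i$ in $P$ under Alice's induced strategy, which is at most $p_{*i}$. Thus $P'$ implements $\CF(p'_{00},p'_{11},p_{0*},p_{1*},p_{*0},p_{*1})$.

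The only real subtlety is that handing Alice the final ``keep/reject'' decision might look like it grants her new cheating power, so the main thing to check is that the added round never lets her \emph{increase} the probability of any output she could force in $P$. This is immediate because the post-processing can only replace outputs by $\Delta$, never turn a $\Delta$ or a $1-i$ into an $i$; the analogous observation handles dishonest Bob, whose only interaction with the new round is to receive a message from Alice that he cannot influence.
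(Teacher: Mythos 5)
Your construction is exactly the one the paper uses: run $P$, then have Alice downgrade an output $i$ to $\Delta$ with probability $1 - p'_{ii}/p_{ii}$ and have Bob follow suit, so that the honest-case probabilities become $p'_{ii}$ while the post-processing can only ever replace outputs by $\Delta$ and hence cannot increase any cheating probability. Your write-up just spells out the verification that the paper dismisses as obvious; the argument is correct and essentially identical.
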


\begin{proof}
 $P'$ is defined as follows: The players execute protocol $P$. If the output is $i\in \{0,1\}$, then Alice changes to $\Delta$ with probability $1 - p'_{ii} / p_{ii}$. If Alice changes to $\Delta$, Bob also changes to $\Delta$. Obviously, the cheating probabilities are still bounded by $p_{0*},p_{1*},p_{*0},p_{*1}$, which implies that that protocol $P'$ implements a $\CF(p'_{00},p'_{11},p_{0*},p_{1*},p_{*0},p_{*1})$.
 \qed
\end{proof}

Combining
Lemmas~\ref{lem:protc1},~\ref{lem:protc2} and~\ref{lem:chelp}, we obtain Lemma~\ref{lemma:classicalProtocol}. 
\begin{lemma} \label{lemma:classicalProtocol}
Let $p_{00},p_{11},p_{0*},p_{1*},p_{*0},p_{*1} \in [0,1]$. There exists a classical protocol that implements
$\CF(p_{00},p_{11},p_{0*},p_{1*},p_{*0},p_{*1})$ if
\begin{align*}
 p_{00} &\leq p_{0*} p_{*0}\;, \\
 p_{11} &\leq p_{1*}p_{*1}\;, \ \text{and}  \\
 p_{00} + p_{11} &\leq p_{0*} p_{*0} + p_{1*}p_{*1} -  \max(0,p_{0*} + p_{1*} -1) \max(0,p_{*0} + p_{*1} -1) \;.
\end{align*}
\end{lemma}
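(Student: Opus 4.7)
\medskip
\noindent\textbf{Proof plan for Lemma~\ref{lemma:classicalProtocol}.}
The strategy is to reduce the general statement to the two explicit constructions in Lemmas~\ref{lem:protc1} and~\ref{lem:protc2}, using Lemma~\ref{lem:chelp} to dial the honest-agreement probabilities down to the requested values. I would split on the sign of $(p_{0*}+p_{1*}-1)(p_{*0}+p_{*1}-1)$.

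\emph{Case 1:} $p_{0*}+p_{1*}\le 1$ or $p_{*0}+p_{*1}\le 1$. Then the $\max$-product in the third hypothesis vanishes, and the third bound reduces to $p_{00}+p_{11}\le p_{0*}p_{*0}+p_{1*}p_{*1}$, which is already implied by the first two. By Lemma~\ref{lem:protc1} there is a protocol realising $\CF(p_{0*}p_{*0},\,p_{1*}p_{*1},\,p_{0*},p_{1*},p_{*0},p_{*1})$, and applying Lemma~\ref{lem:chelp} with target values $p_{00}\le p_{0*}p_{*0}$ and $p_{11}\le p_{1*}p_{*1}$ yields the desired protocol.

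\emph{Case 2:} $p_{0*}+p_{1*}>1$ and $p_{*0}+p_{*1}>1$. Set
\[
S \;:=\; p_{0*}p_{*0}+p_{1*}p_{*1}-(p_{0*}+p_{1*}-1)(p_{*0}+p_{*1}-1),
\]
so that the hypothesis reads $p_{00}+p_{11}\le S$. The key step is to exhibit values $\tilde p_{00}\in[p_{00},p_{0*}p_{*0}]$ and $\tilde p_{11}\in[p_{11},p_{1*}p_{*1}]$ with $\tilde p_{00}+\tilde p_{11}=S$, so that Lemma~\ref{lem:protc2} applies on the boundary. Choosing $\tilde p_{11}$ amounts to intersecting the intervals $[\,p_{11},\,p_{1*}p_{*1}\,]$ and $[\,S-p_{0*}p_{*0},\,S-p_{00}\,]$. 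Non-emptiness reduces to four inequalities, all of which follow from the hypotheses: $p_{11}\le p_{1*}p_{*1}$ and $p_{00}\le p_{0*}p_{*0}$ are assumed; $p_{00}+p_{11}\le S$ gives $p_{11}\le S-p_{00}$; and $S\le p_{0*}p_{*0}+p_{1*}p_{*1}$, which holds precisely because both sums $p_{0*}+p_{1*}$ and $p_{*0}+p_{*1}$ exceed $1$, gives $S-p_{0*}p_{*0}\le p_{1*}p_{*1}$. Pick any $\tilde p_{11}$ from the intersection and set $\tilde p_{00}:=S-\tilde p_{11}$; then Lemma~\ref{lem:protc2} produces a protocol for $\CF(\tilde p_{00},\tilde p_{11},p_{0*},p_{1*},p_{*0},p_{*1})$, and Lemma~\ref{lem:chelp} reduces the honest outcome probabilities to $p_{00}$ and $p_{11}$.

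\emph{Anticipated obstacle.} No deep argument is needed; the only thing that could go wrong is a mismatch in Case~2, where the boundary construction of Lemma~\ref{lem:protc2} is only available on the exact curve $\tilde p_{00}+\tilde p_{11}=S$. The care point is therefore the interval-intersection check above, which is where both strict inequalities $p_{0*}+p_{1*}>1$ and $p_{*0}+p_{*1}>1$ are used simultaneously (they are what forces $S\le p_{0*}p_{*0}+p_{1*}p_{*1}$). Once this is in place, the rest is immediate glue between the existing lemmas.
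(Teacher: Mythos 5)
Your proposal is correct and follows essentially the same route as the paper: the same case split on whether $p_{0*}+p_{1*}\le 1$ or $p_{*0}+p_{*1}\le 1$, with Lemma~\ref{lem:protc1} handling that case and Lemmas~\ref{lem:protc2} and~\ref{lem:chelp} handling the other. Your interval-intersection argument for finding $(\tilde p_{00},\tilde p_{11})$ on the boundary curve is exactly the step the paper's one-line proof leaves implicit, and your verification of it is sound.
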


\begin{proof}
 If $p_{0*} + p_{1*} >1$ and $p_{*0} + p_{*1} > 1$, then 
Lemmas~\ref{lem:protc2} and~\ref{lem:chelp} imply the bound. Otherwise, i.e., if either $p_{0*} + p_{1*} \leq 1$ or $p_{*0} + p_{*1} \leq 1$, then $\max(0,p_{0*} + p_{1*} -1) \max(0,p_{*0} + p_{*1} -1) = 0$ and the bound is implied by Lemmas~\ref{lem:protc1} and~\ref{lem:chelp}.
\qed
\end{proof}

\subsection{Impossibilities}

The following lemma shows that the bounds obtained in Lemma~\ref{lemma:classicalProtocol} are optimal. The proof uses the same idea as the proof of Theorem~7 in~\cite{HoMQUn06}.

\begin{lemma} \label{lemma:classic-imposs}
Let the parameters $p_{00},p_{11},p_{0*},p_{1*},p_{*0},p_{*1}$ be $\in [0,1]$.  A coin flip \linebreak[4] $\CF(p_{00},p_{11},p_{0*},p_{1*},p_{*0},p_{*1})$ can only be implemented by a classical protocol if
\begin{align*}
 p_{00} &\leq p_{0*} p_{*0}\;, \\
 p_{11} &\leq p_{1*}p_{*1}\;, \ \text{and}  \\
 p_{00} + p_{11} &\leq p_{0*} p_{*0} + p_{1*}p_{*1} - \max(0,p_{0*} + p_{1*} -1) \max(0,p_{*0} + p_{*1} -1)\;.
\end{align*}
\end{lemma}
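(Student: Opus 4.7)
The plan is to model the classical protocol as a communication tree: each internal node is owned by one of the two players, and each leaf carries a deterministic output in $\{0,1,\Delta\}$ (we may assume the two honest players' outputs agree at every leaf, by the final-check reduction already noted in the paper). At each node $t$ I introduce three quantities: $h_i(t)$, the probability that two honest continuations from $t$ both output $i \in \{0,1\}$; $f_i(t)$, the supremum over cheating-Alice continuations of the probability that honest Bob outputs $i$; and $g_i(t)$, the analogous quantity for cheating Bob against honest Alice. These satisfy simple recursions: at an Alice-turn node with honest message distribution $(p_m)_m$ we have $h_i(t) = \sum_m p_m h_i(t\cdot m)$, $g_i(t) = \sum_m p_m g_i(t\cdot m)$, and $f_i(t) = \max_m f_i(t\cdot m)$; the roles of $f$ and $g$ swap at Bob-turn nodes. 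At a leaf with output $o$ we have $h_i(t) = f_i(t) = g_i(t) = \mathbf{1}[o = i]$. At the root $f_i = p_{*i}$, $g_i = p_{i*}$, and $h_i = p_{ii}$, so the three claimed inequalities are the values at $t = \emptyset$ of node-wise inequalities that I will establish by induction on tree depth.

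The first two inequalities are the node-wise bound $h_i(t) \leq f_i(t) g_i(t)$ for $i \in \{0,1\}$. The leaf case is immediate, and at an Alice-turn node
\[
 \sum_m p_m h_i(t\cdot m) \leq \sum_m p_m f_i(t\cdot m) g_i(t\cdot m) \leq \bigl(\max_m f_i(t\cdot m)\bigr)\bigl(\textstyle\sum_m p_m g_i(t\cdot m)\bigr) = f_i(t) g_i(t),
\]
with Bob-turn nodes symmetric. For the third inequality I would prove, by induction on depth, the strengthened node-wise bound
\[
 h_0(t) + h_1(t) \leq f_0(t) g_0(t) + f_1(t) g_1(t) - [f_0(t) + f_1(t) - 1]_+ \, [g_0(t) + g_1(t) - 1]_+,
\]
where $[x]_+ := \max(0,x)$. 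The three leaf cases ($o = 0, 1, \Delta$) are a direct check. Writing $f_i^{*} = f_i(t)$ and $g_i^{*} = g_i(t)$ and plugging the recursions together with the IH into the Alice-turn inductive step, the problem reduces to establishing
\[
 [f_0^{*}+f_1^{*}-1]_+ [g_0^{*}+g_1^{*}-1]_+ - \sum_m p_m C(t\cdot m) \leq \sum_m p_m \bigl[(f_0^{*} - f_0(t\cdot m)) g_0(t\cdot m) + (f_1^{*} - f_1(t\cdot m)) g_1(t\cdot m)\bigr],
\]
where $C(t) := [f_0(t)+f_1(t)-1]_+ [g_0(t)+g_1(t)-1]_+$.

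The crux is taming this last inequality. Since $x \mapsto [x]_+$ is convex and $g_i^{*}$ is an honest-Alice average, Jensen yields $[g_0^{*} + g_1^{*} - 1]_+ \leq \sum_m p_m [g_0(t\cdot m) + g_1(t\cdot m) - 1]_+$; since $[\cdot]_+$ is monotone and $1$-Lipschitz, $[f_0^{*}+f_1^{*}-1]_+ - [f_0(t\cdot m)+f_1(t\cdot m)-1]_+ \leq (f_0^{*} - f_0(t\cdot m)) + (f_1^{*} - f_1(t\cdot m))$. Combining, the problem reduces to the scalar pointwise inequality
\[
 (\alpha + \beta)\,[u + v - 1]_+ \leq \alpha u + \beta v, \qquad u, v \in [0,1],\ \alpha, \beta \geq 0,
\]
which is trivial when $u + v \leq 1$ and follows from $\alpha(v - 1) + \beta(u - 1) \leq 0$ otherwise. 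Bob-turn nodes are handled symmetrically. Specializing the resulting node-wise bound at the root yields the lemma. The main obstacle is exactly this third-bound induction: one has to slide the $[\cdot]_+$ past the averaging factor via Jensen, past the maximization factor via its Lipschitz property, and then recognize that what remains is precisely the simple scalar inequality displayed above.
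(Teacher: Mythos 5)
Your proposal is correct and follows essentially the same route as the paper: a backward induction over the protocol tree with exactly the same node-wise invariant, the same averaging/maximizing recursions for honest, cheating-Alice and cheating-Bob quantities, and the same final-check and stateless-transcript normalizations. The only difference is how the inductive step for the third inequality is verified --- the paper packages it as concavity in the averaged pair plus monotonicity in the maximized pair of the function $f(a,b,c,d) = ac+bd-\max(0,a+b-1)\max(0,c+d-1)$, whereas you unfold the same two facts into an explicit Jensen step, a Lipschitz step, and the scalar inequality $(\alpha+\beta)\max(0,u+v-1)\leq \alpha u+\beta v$; the computations are equivalent.
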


\begin{proof}
We can assume that the output is a deterministic function of the transcript of the protocol. This can be enforced by adding an additional round at the end of the protocol where the two players tell each other what they are going to output. Since we do not require the protocol to be efficient, Lemma~7 in~\cite{HoMQUn06} implies that we can also assume that the honest parties maintain no internal state except for the list of previous messages.

 For any partial transcript $t$ of a protocol, we define $p^t_{0*}$ as the maximum over all transcripts starting with $t$, i.e., the maximum probability with which Bob can force Alice to output $0$, given the previous interaction has given $t$.
In the same way, we define $p^t_{1*}$, $p^t_{*0}$, $p^t_{*1}$. We define $p^t_{00}$ and $p^t_{11}$ as the probabilities that the output of the honest players will be $00$ and $11$, respectively, given the previous interaction has given $t$. We will now do an induction over all transcripts, showing that for all $t$, we have
\begin{align*}
 p^t_{00} &\leq p^t_{0*} p^t_{*0}\;, \\
 p^t_{11} &\leq p^t_{1*} p^t_{*1}\;, \ \text{and}  \\
 p^t_{00} + p^t_{11} &\leq p^t_{0*} p^t_{*0} + p^t_{1*}p^t_{*1} - \max(0,p^t_{0*} + p^t_{1*} -1) \max(0,p^t_{*0} + p^t_{*1} -1)\;.
\end{align*}

For complete transcripts $t$, each honest player will output either $0$, $1$ or $\Delta$ with probability $1$ and we always have
$p^t_{0*} + p^t_{1*} -1 \leq 0$ and $p^t_{*0} + p^t_{*1} -1 \leq 0$. Therefore, we only need to check that
$p^t_{00} \leq p^t_{0*} p^t_{*0}$ and $p^t_{11} \leq p^t_{1*} p^t_{*1}$. For $j \in \{0,1\}$, if $p^t_{jj} = 1$, then $p^t_{j*} = p^t_{*j} = 1$, so the condition is satisfied. In all the other cases we have $p^t_{jj} = 0$, in which case the condition is satisfied as well.

Let $t$ now be a partial transcript, and let Alice be the next to send a message. Let $M$ be the set of all possible transcripts after Alice has sent her message. For the induction step, we now assume that the statement holds for all transcript in $M$, and show that then it must also hold for $t$.
Let $r_i$ be the probability that an honest Alice will choose message $i \in M$. By definition, we have
\begin{align}
\nonumber  p^t_{00} = \sum_{i \in M} r_i p^i_{00}, \quad p^t_{11} = \sum_{i \in M} r_i p^i_{11},\quad 
p^t_{0*} = \sum_{i \in M} r_i p^i_{0*}, \quad p^t_{1*}
 = \sum_{i \in M} r_i p^i_{1*}\;,
\end{align}
\begin{align}
\nonumber p^t_{*0} = \max_{i \in M} p^i_{*0},  \quad p^t_{*1} = \max_{i \in M} p^i_{*1}
\;.
\end{align}
For $j \in \{0,1\}$ it holds that
\begin{align}
\nonumber  p^t_{jj} = \sum_{i \in M} r_i p^i_{jj} \leq \sum_{i \in M} r_i  p^i_{j*} p^i_{*j} \leq \sum_{i \in M} r_i  p^i_{j*} p^t_{*j} =  p^t_{j*} p^t_{*j}\;,
\end{align}
which shows the induction step for the first two inequalities. To show the last inequality, let 
\begin{align}
\nonumber  f(a,b,c,d) := ac + bd - \max(0,a+b -1) \max(0,c+d -1)\;,
\end{align}
where $a,b,c,d \in [0,1]$.
If we fix the values $c$ and $d$, we get the function
$f_{c,d}(a,b) := f(a,b,c,d)$. It consists of two linear functions: If $a+b \leq 1$, we have
\begin{align}
\nonumber  f_{c,d}(a,b) = ac + bd\;,
\end{align}
and if $a+b \geq 1$ we have
\begin{align}
\nonumber  f_{c,d}(a,b) = ac + bd - (a+b -1) \max(0,c+d -1)\;.
\end{align}
Note that these two linear functions are equal if $a+b=1$, and we have
$(a+b -1) \max(0,c+d -1) \geq 0$ if $a+b \geq 1$.
It follows that $f_{c,d}(a,b)$
 is concave, meaning that for all $\gamma,a,b,a',b' \in [0,1]$,
we have
\begin{align} \label{eq:concave}
  \gamma f_{c,d}(a,b) + (1-\gamma) f_{c,d}(a',b') \leq f_{c,d} (\gamma a+ (1-\gamma) a', \gamma b+ (1-\gamma) b'  )\;.
\end{align}

Since for any $a+b \neq 1$ and $c+d\neq 1$
\begin{align} \label{eq:monotone}
\frac{\partial }{\partial c} f(a,b,c,d) \geq  0 \qquad \mbox{and} \qquad \frac{\partial }{\partial d} f(a,b,c,d) \geq  0\;,
\end{align}
we have $f(a,b,c',d) \geq f(a,b,c,d)$ for $c' \geq c$ and
$f(a,b,c,d') \geq f(a,b,c,d)$ for $d' \geq d$. Hence,
\begin{align*}
 p^t_{00} &+ p^t_{11}\\
& = \sum_{i \in M} r_i (p^i_{00} + p^i_{11} ) \\
& \leq \sum_{i \in M} r_i  \left ( p^i_{0*} p^i_{*0} + p^i_{1*}p^i_{*1} -  \max(0,p^i_{0*} + p^i_{1*} -1)  \max(0,p^i_{*0} + p^i_{*1} -1) \right ) \\
& \leq \sum_{i \in M} r_i  \left ( p^i_{0*} p^t_{*0} + p^i_{1*}p^t_{*1} -  \max(0,p^i_{0*} + p^i_{1*} -1) \max(0,p^t_{*0} + p^t_{*1} -1) \right ) \\
& \stackrel{(\ref{eq:concave})}{\leq} p^t_{0*} p^t_{*0} + p^t_{1*}p^t_{*1} -  \max(0,p^t_{0*} + p^t_{1*} -1) \max(0,p^t_{*0} + p^t_{*1} -1)\;,
\end{align*}
and the inequalities also hold for $t$. The statement follows by induction.
\qed
\end{proof}

From Lemmas~\ref{lemma:classicalProtocol} and~\ref{lemma:classic-imposs} we obtain Theorem~\ref{thm:class}.

\section{Quantum Coin Flipping} \label{sec:quantum}

\subsection{Protocols}\label{subsec:qprotocols}

An \emph{unbalanced weak coin flip with error $\eps$} $\WCF(z,\eps)$ is a $\CF(z,1-z,1,1-z +\eps ,z +\eps,1)$, i.e., a coin flip where Alice wins with probability $z$, Bob with probability $1-z$ and both cannot increase their probability to win by more than $\eps$. (They may, however, decrease the probability to $0$.) Let $\WCF(z) := \WCF(z,0)$.

It has been shown by Mochon~\cite{Mochon07} that weak coin flipping can be implemented with an arbitrarily small error.

\begin{theorem}[Mochon~\cite{Mochon07}] \label{thm:wcf12}
For any constant $\eps > 0$, there exists a quantum protocol that implements $\WCF(1/2, \eps)$.
\end{theorem}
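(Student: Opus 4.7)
This theorem is the celebrated result of Mochon on quantum weak coin flipping with arbitrarily small bias, so my plan is to outline his construction rather than give a self-contained proof. The strategy proceeds in three conceptual stages. First, I would invoke \emph{Kitaev's formalism of point games}, which provides a bijective correspondence between quantum weak coin-flipping protocols and a purely combinatorial game played with finitely supported, nonnegative measures on $[0,1]^{2}$. Under this correspondence, a weak coin flip with cheating probabilities $(P_A^*, P_B^*)$ corresponds to a sequence of \emph{valid moves} transforming the initial measure $\tfrac12 \delta_{(0,1)} + \tfrac12 \delta_{(1,0)}$ into a final measure supported on a single point $(P_A^*, P_B^*)$. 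A move is valid precisely when it is consistent with operator-monotone spectral transformations: replacing a collection of atoms with the same collection multiplied by an operator-monotone function of their coordinates.

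Second, once in the point-game framework, the goal reduces to constructing, for every $\eps > 0$, a finite valid sequence of moves ending at $(\tfrac12+\eps,\tfrac12+\eps)$. I would first describe the two elementary move types — \emph{splits} (replacing one atom by several along a common coordinate line) and \emph{merges} (the reverse operation) — and observe that both are constrained by a single inequality coming from operator monotonicity of $f(x)=(x+c)^{-1}$. The key construction is Mochon's sequence of progressively more refined point games whose endpoints converge to $(\tfrac12,\tfrac12)$; it uses ladders of merges governed by the roots of carefully chosen rational functions and a clever bookkeeping argument to ensure that all intermediate configurations are valid.

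Third, I would translate the resulting point game back into an explicit quantum protocol. Kitaev's dictionary turns each split/merge into a unitary round between Alice and Bob on a shared ancilla of polynomial size in the number of moves; honest behaviour yields the balanced outcome, while Kitaev's semidefinite duality guarantees that no cheating strategy can exceed the value $1/2+\eps$ witnessed by the point game.

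The hard part — and the reason the theorem is deep — is the combinatorial construction of the point games in the second stage. Showing that operator-monotone splits and merges can in fact realise every bias $\tfrac12+\eps$ requires Mochon's ingenious use of Nevanlinna and Pick-type interpolation to certify that the required atoms can be produced from their predecessors. I would not attempt to reconstruct this step here; instead I would cite Mochon's construction~\cite{Mochon07} in full, since it is orthogonal to the applications we make of the theorem in the sequel (where \textsf{WCF}$(1/2,\eps)$ is used only as a black box, together with its unbalanced generalisation from~\cite{ChaKer09}).
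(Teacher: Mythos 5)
The paper gives no proof of this statement at all---it is imported verbatim from Mochon~\cite{Mochon07} and used strictly as a black box in the sequel---so your decision to defer entirely to the citation is exactly what the paper does, and the statement carries no proof obligation here. Your accompanying sketch of Kitaev's point-game formalism and Mochon's split/merge ladders is a fair high-level summary of the actual argument (modulo minor looseness, e.g.\ the protocol--point-game correspondence is only approximate rather than bijective), but none of it is needed for this paper.
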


In~\cite{Mochon04a}, Mochon showed that quantum coin-flipping protocols compose sequentially. Implicitly using this result, Chailloux and Kerenidis showed that an unbalanced weak coin flip can be implemented from many instances of (balanced) weak coin flips.

\begin{proposition}[Chailloux, Kerenidis~\cite{ChaKer09}] \label{lem:wcf1}
For all $z \in [0,1]$, there exists a classical protocol that uses $k$ instances of $\WCF(1/2, \eps)$ and implements $\WCF(x, 2\eps)$, for a value $x \in [0,1]$ with $|x - z| \leq 2^{-k}$.
\end{proposition}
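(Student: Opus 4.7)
The plan is to build a recursive classical protocol driven by the binary expansion of an approximation of $z$. Let $b_1, \ldots, b_k \in \{0,1\}$ be the first $k$ bits of the binary expansion of $z$ (with ties broken by rounding down) and set $x := \sum_{i=1}^k b_i 2^{-i}$, so that $|x - z| \leq 2^{-k}$. The protocol $P_k(x)$ proceeds recursively: invoke one instance of $\WCF(1/2, \eps)$; if $b_1 = 1$ and the outcome favors Alice, declare Alice the winner; if $b_1 = 0$ and the outcome favors Bob, declare Bob the winner; otherwise recurse on $P_{k-1}(2x)$ (when $b_1 = 0$) or $P_{k-1}(2x-1)$ (when $b_1 = 1$). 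Note that $2x$ and $2x-1$ are dyadic rationals with denominator $2^{k-1}$, so the recursion is well defined, and the base case $P_0$ simply outputs the deterministic winner encoded by $x \in \{0,1\}$. A routine induction shows that with two honest players, Alice wins with probability exactly $x$: for $b_1 = 0$, $(1/2) \cdot 2x = x$; for $b_1 = 1$, $1/2 + (1/2)(2x-1) = x$. Combined with $|x - z| \leq 2^{-k}$, this settles the honest part.

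For the cheating analysis, let $a_k(x)$ be the supremum over dishonest-Alice strategies of her win probability. In each $\WCF(1/2, \eps)$ invocation, dishonest Alice can force the outcome to favor her with probability at most $1/2 + \eps$ (and any smaller value by simply deviating in the other direction), and by sequential composability of weak coin flips~\cite{Mochon04a} one obtains
\begin{align*}
a_k(x) &\leq (1/2 + \eps)\, a_{k-1}(2x) \quad \text{if } b_1 = 0, \\
a_k(x) &\leq (1/2 + \eps) + (1/2 - \eps)\, a_{k-1}(2x-1) \quad \text{if } b_1 = 1.
\end{align*}
Setting $\delta_k := \sup_x (a_k(x) - x)$ and using $a_{k-1}(y) \leq y + \delta_{k-1}$, both cases yield $\delta_k \leq \eps + (1/2 + \eps)\,\delta_{k-1}$, which telescopes to a geometric series bounded by $\eps/(1/2-\eps)$, i.e.\ by $2\eps$ for small $\eps$. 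A symmetric analysis bounds the dishonest-Bob win probability by $(1-x) + 2\eps$, whence $P_k$ implements $\WCF(x, 2\eps)$.

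The main obstacle is the cheating analysis. The naive worry is that chaining $k$ weak coin flips might let bias accumulate linearly to $k\eps$, which would be useless for large $k$. The key structural point is that in the recursive construction, any bias $\delta_{k-1}$ incurred at the inner level is multiplied by at most $1/2 + \eps < 1$ before being combined with the outer level's fresh $\eps$. Consequently the biases form a damped geometric series whose total is bounded by $2\eps$ uniformly in $k$, which is precisely the trade-off that makes this reduction useful.
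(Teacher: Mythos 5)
The paper does not actually prove this proposition: it is imported from~\cite{ChaKer09} (together with Mochon's sequential-composition result~\cite{Mochon04a}), so the only comparison available is against that cited argument. Your construction --- recursing on the binary expansion of $z$, deciding the coin at level $i$ when the sub-flip outcome matches the bit $b_i$ --- is exactly the Chailloux--Kerenidis reduction, your honest-case computation is correct, and you correctly identify the two genuinely important points: that the composition of the quantum sub-protocols needs~\cite{Mochon04a}, and that the bias must be damped geometrically rather than accumulating as $k\eps$. The recursions $a_k(x) \le (1/2+\eps)a_{k-1}(2x)$ and $a_k(x) \le (1/2+\eps) + (1/2-\eps)a_{k-1}(2x-1)$ are also right (the second using $a_{k-1}\le 1$).

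The gap is in the last step. Your recursion $\delta_k \le \eps + (1/2+\eps)\delta_{k-1}$ telescopes to $\eps\sum_{j=0}^{k-1}(1/2+\eps)^j$, whose limit is $\frac{\eps}{1/2-\eps} = \frac{2\eps}{1-2\eps}$, and this is \emph{strictly greater} than $2\eps$ for every $\eps>0$; the parenthetical ``i.e.\ by $2\eps$ for small $\eps$'' is therefore false, and indeed your bound exceeds $2\eps$ as soon as $(1/2+\eps)^k < 2\eps$. The loss comes from discarding the $x$-dependence: the per-level excess is $2\eps\, x$ (case $b_1=0$) or $2\eps(1-x)$ (case $b_1=1$), i.e.\ $2\eps\min(x,1-x)$, and bounding this by $\eps$ at every level is too crude. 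Keeping it, and unrolling, the first-order excess is $2\eps\sum_{i=1}^{k}2^{-(i-1)}\min(x_i,1-x_i)$ where $x_i$ is the $i$-th shift of $x$; since $\min(x_i,1-x_i) = y_{i+1}/2$ with $y_{i+1}\in\{x_{i+1},1-x_{i+1}\}$, this sum is at most $2\eps\sum_{i=1}^{k}2^{-i} = 2\eps(1-2^{-k})$, which is where the constant $2$ actually comes from --- a level that contributes the full $\eps$ forces the continuation's excess to be small, a correlation your worst-case $\delta_k$ throws away. So either carry out this sharper $x$-dependent induction (checking the $O(\eps^2)$ terms), or state the result with the constant $\frac{2\eps}{1-2\eps}$ (e.g.\ $3\eps$ for $\eps\le 1/6$); the latter is harmless for every use of the proposition in this paper, since $\eps$ is an arbitrary positive constant throughout, but it does not literally establish the $2\eps$ claimed in the statement.
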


The following lemma shows that the parameter $z$ can be slightly changed without increasing the error much.

\begin{lemma} \label{lem:wcf2}
For any $1 > z' > z > 0$, there exists a classical protocol that uses $1$ instance of $\WCF(z', \eps)$ and implements $\WCF(z, \eps + z' - z)$.
\end{lemma}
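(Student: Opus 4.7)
The plan is to exhibit an explicit classical reduction that invokes $\WCF(z',\eps)$ once and then adjusts the outcome using local randomness on Alice's side. Since $z < z'$, the resource over-allocates wins to Alice compared to the target, so the natural idea is to have honest Alice randomly ``concede'' a fraction of her wins to Bob.

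Concretely, I would have the players first invoke $\WCF(z',\eps)$ to obtain an outcome $c \in \{0,1,\Delta\}$. If $c = 0$, Alice locally samples a bit $d$ that equals $1$ with probability $1 - z/z'$ and $0$ otherwise, sends $d$ to Bob, and both parties output $d$; if $c \in \{1,\Delta\}$, both parties simply output $c$. Honest correctness is then immediate: both parties output $0$ with probability $z' \cdot (z/z') = z$, and both output $1$ with probability $(1-z') + z' \cdot (1 - z/z') = 1 - z$, matching the parameters of $\WCF(z)$.

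For the cheating analysis, I would let $q_0, q_1, q_\Delta$ denote the probabilities with which the cheater drives the underlying WCF outcome to $0, 1, \Delta$. A cheating Alice maximizes the probability of final output $0$ by pushing $q_0$ as high as possible (bounded by $z' + \eps$, the $p_{*0}$ parameter of $\WCF(z',\eps)$) and then always announcing $d = 0$; this yields a bias of at most $z' + \eps = z + \eps + (z' - z)$, exactly the target $p_{*0}$. For a cheating Bob, the probability of final output $1$ is $q_1 + q_0(1 - z/z')$, subject to $q_1 \leq 1 - z' + \eps$ (his cheating bound in the resource) and $q_0 + q_1 \leq 1$, with no further constraint on $q_0$ since $p_{0*} = 1$ in the resource. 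Because the coefficient of $q_1$ equals $1$ and the coefficient of $q_0$ is strictly smaller, the optimum sets $q_1 = 1 - z' + \eps$ and $q_0 = z' - \eps$, giving a bias of $1 - z + \eps \cdot (z/z')$, which is at most $1 - z + \eps + (z' - z)$, the target $p_{1*}$.

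The only subtlety is tracking the asymmetry between the two winning probabilities and checking that the slack $z' - z$ absorbed into the error is enough in both cheating directions; beyond that, the argument is a one-variable linear optimization and reduces to arithmetic, so I do not anticipate any serious obstacle.
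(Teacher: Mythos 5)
Your proposal is correct and is essentially the paper's own proof: the same one-instance reduction in which Alice probabilistically concedes her wins (flipping $0$ to $1$ with probability $1-z/z'$, with Bob only heeding the announcement when the resource output was $0$), followed by the same bound $z'+\eps$ for a cheating Alice and the same optimization giving $1-z+\eps\cdot z/z'\leq 1-z+\eps$ for a cheating Bob. The only cosmetic difference is that you set up the Bob case as an explicit linear program over $(q_0,q_1)$ where the paper parametrizes directly by $x=q_1$; the conclusions coincide.
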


\begin{proof}
The protocol first calls $\WCF(z', \eps)$. If Alice wins, i.e., if the output is $0$, then she changes the output bit to $1$ with probability $1 - z/{z'}$, and sends the bit to Bob. Bob only accepts changes from $0$ to $1$, but not from $1$ to $0$. Alice can force the coin to be $0$ with probability at most
$z' + \eps = z + (\eps + z' - z)$. Let $x \in [0,1 - z' + \eps]$ be the probability with which a cheating Bob
forces the protocol $\WCF(z', \eps)$ to output $1$. Alice will output $1$ with probability
\begin{align}
\nonumber  x + (1-x) \left (1 - \frac z {z'} \right )&= 1 - \frac z {z'} + x \cdot \frac z {z'}
\leq 1 - \frac z {z'} + (1 - z' + \eps) \cdot \frac z {z'}\\
\nonumber &= 1 - z + \eps \cdot \frac z {z'} \leq 1 - z + \eps\;.
\end{align}
\qed
\end{proof}

Note that for $z \in \{0,1\}$, the implementation of $\WCF(z, 0)$ is trivial. Hence,
Theorem~\ref{thm:wcf12}, Proposition~\ref{lem:wcf1} and Lemma~\ref{lem:wcf2} imply together that $\WCF(z, \eps)$ can be implemented for any $z \in [0,1]$ with an arbitrarily small error $\eps$.
To simplify the analysis of our protocols, we will assume that we have access to $\WCF(z)$ for any $z \in [0,1]$. The following lemma shows that when $\WCF(z)$ is replaced by $\WCF(z, \eps)$, the bias of the output is increased by at most $2\eps$.

\begin{lemma} \label{lem:shakeWCF}
Let $P$ be a protocol that implements $\CF(p_{00},p_{11},p_{0*},p_{1*},p_{*0},p_{*1})$ using  one instance of $\WCF(z)$. If $\WCF(z)$ is replaced by $\WCF(z,\eps)$, then $P$ implements 
 $\CF(p_{00},p_{11},p_{0*}+2\eps,p_{1*}+2\eps,p_{*0}+2\eps,p_{*1}+2\eps)$.
\end{lemma}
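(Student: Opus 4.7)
The plan is to show that replacing $\WCF(z)$ by $\WCF(z,\eps)$ changes each of the four cheating-success probabilities by at most $2\eps$, while leaving the honest-output probabilities $p_{00}$ and $p_{11}$ unaffected. The honest case is immediate: when both parties follow $P$, the WCF is invoked honestly, and $\WCF(z,\eps)$ agrees with $\WCF(z)$ on honest inputs (both output a common bit with probability $z$ of being $0$), so the outer-protocol transcript distribution, and hence $p_{00}$ and $p_{11}$, are preserved.

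For the cheating bounds I first treat a dishonest Bob who tries to force Alice to output $0$; the other three cases will be symmetric. Fix any cheating strategy $s'$ for Bob in $P'$. I would condition on the pre-WCF transcript (including any pre-WCF internal state of Bob) $T$, and define $\zeta(T):=\Pr[\text{Alice's WCF view}=0\mid T]$ together with $a_b(T):=\Pr[\text{Alice outputs } 0\mid T,\ \text{Alice's WCF view}=b]$ for $b\in\{0,1\}$. The $\WCF(z,\eps)$ guarantee against a cheating Bob gives $\Pr[\text{Alice's WCF view}=1\mid T]\le 1-z+\eps$, i.e.\ $\zeta(T)\ge z-\eps$.

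Next I construct a matching strategy $s$ for Bob in the original $P$ (using the ideal $\WCF(z)$): identical pre-WCF and post-WCF behavior, but with the WCF cheat adjusted so that its induced probability that Alice sees $0$ becomes $\tilde\zeta(T):=\max(\zeta(T),z)$; this satisfies the stronger $\WCF(z)$ constraint $\tilde\zeta(T)\ge z$. Since Alice's post-WCF behavior depends only on $T$ and her WCF view, $a_b(T)$ is the same under $s'$ and $s$, and therefore
\begin{align*}
\Pr\nolimits_{s'}[\text{Alice}=0]-\Pr\nolimits_s[\text{Alice}=0]
&=\sum_T P(T)\bigl(\zeta(T)-\tilde\zeta(T)\bigr)\bigl(a_0(T)-a_1(T)\bigr)\\
&\le\eps,
\end{align*}
using $|\zeta(T)-\tilde\zeta(T)|\le\eps$ and $|a_0(T)-a_1(T)|\le 1$. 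Since $s$ is admissible in $P$, the hypothesis gives $\Pr_s[\text{Alice}=0]\le p_{0*}$, and hence $\Pr_{s'}[\text{Alice}=0]\le p_{0*}+\eps\le p_{0*}+2\eps$.

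The main obstacle I foresee is making the decomposition rigorous when the subprotocol is an ideal functionality whose security must hold conditional on arbitrary pre-WCF history, including any auxiliary register held by the cheater; one has to argue that the $\WCF(z,\eps)$ bound on $\Pr[\text{Alice's WCF view}=1]$ applies conditionally on any such $T$. Once that is set up, the three remaining cheating cases (Bob forcing $1$; Alice forcing $0$ or $1$) follow by analogous arguments: in each case the relaxation from $\WCF(z)$ to $\WCF(z,\eps)$ shifts one marginal probability by at most $\eps$, yielding an additive $\eps\le 2\eps$ slack in the corresponding coin-flip parameter.
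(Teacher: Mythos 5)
Your argument is sound in outline but takes a genuinely different route from the paper. The paper's proof is a three-line coupling argument: in each of the two settings one defines a ``bad'' event of probability at most $\eps$ such that, conditioned on neither event occurring, the executions with $\WCF(z)$ and $\WCF(z,\eps)$ are identical, and a union bound then shifts every cheating probability by at most $2\eps$. You instead condition on the pre-WCF history, compare the conditional distribution of the honest player's WCF outcome against what is achievable with the ideal $\WCF(z)$, and build an explicit simulating adversary for the original protocol $P$. This buys you the sharper one-sided bound $p_{0*}+\eps$ rather than $p_{0*}+2\eps$, at the cost of more bookkeeping; the conditioning issue you flag is real, but it is equally present (implicitly) in the paper's own proof, and both arguments ultimately rest on the sequential composability of coin-flipping primitives established by Mochon and used by Chailloux and Kerenidis, which the paper cites.

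One step does need repair: you treat Alice's WCF view as binary. Under Definition~\ref{def:coinflip} a cheating Bob may also drive the WCF outcome to $\Delta$, in which case $\Pr[\text{view}=1\mid T]\le 1-z+\eps$ does \emph{not} imply $\zeta(T)\ge z-\eps$ (Bob can force $\zeta(T)=0$ by always aborting), and your two-term decomposition of $\Pr[\text{Alice}=0\mid T]$ omits the $\Delta$ branch. The fix preserves your idea: the only constraint relaxed in passing from $\WCF(z)$ to $\WCF(z,\eps)$ is the cap on $\Pr[\text{view}=1\mid T]$, so moving the excess mass $\max\bigl(0,\Pr[\text{view}=1\mid T]-(1-z)\bigr)\le\eps$ onto another outcome yields a three-outcome conditional distribution that is achievable against the ideal $\WCF(z)$ and is within total-variation distance $\eps$ of the original; since Alice's subsequent output is a fixed (randomized) function of $T$, her view, and the continuation, her output distribution again shifts by at most $\eps$, and the rest of your argument goes through unchanged.
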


\begin{proof}
Let us compare two settings: one where the players execute $P$ using one instance of $\WCF(z,\eps)$, and the other where they use one instance of $\WCF(z)$. When both players are honest, the two settings are obviously identical. Let Alice be honest and Bob malicious. For each setting, we can define an event that occurs with probability at most $\eps$, such that under the condition that the two events do not occur, $\WCF(z)$ and $\WCF(z,\eps)$ and hence the whole protocol are identical. The probability that the two events do not occur is at least $1-2\eps$ by the union bound. Therefore, the probabilities that the honest player outputs $0$ (or $1$) differ by at most $2\eps$. The statement follows.
\qed
\end{proof}

The following protocol is a generalization of the strong coin-flipping protocol $S$ from~\cite{ChaKer09}. It gives optimal bounds for the case where the honest players never abort, i.e., $p_{00} + p_{11} = 1$. 
\begin{framed}
\noindent
Protocol \texttt{QCoinFlip1}:\\
Parameters: $x, z_0, z_1,p_0,p_1 \in [0,1]$.
\begin{itemize}
 \item Alice flips a coin $a \in \{0,1\}$ such that the probability that $a=0$ is $x$ and sends $a$ to Bob.
 \item Alice and Bob execute $\WCF(z_a)$.
 \item If Alice wins, i.e., the outcome is $0$, then both output $a$.
 \item If Bob wins, then he flips a coin $b$ such that $b=a$ with probability $p_a$. Both output $b$.
\end{itemize}
\end{framed}

\begin{lemma} \label{lem:q-noErr}
Let $p_{0*},p_{1*},p_{*0},p_{*1} \in [0,1]$ where $p_{*0}+p_{*1} > 1$, $p_{0*}+p_{1*} > 1$ and $ p_{*0} p_{0*} + p_{*1} p_{1*} = 1$.
Given access to one instance of $\WCF(z)$, we can implement a
$\CF(p_{00},p_{11},p_{0*},p_{1*},p_{*0},p_{*1})$
where $p_{00} = p_{0*} p_{*0}$ and  $p_{11} = p_{1*}p_{*1}$.
\end{lemma}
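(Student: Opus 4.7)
The plan is to instantiate Protocol \texttt{QCoinFlip1} with the parameter choice essentially forced by the four cheating conditions. Concretely, I take
\begin{gather*}
 p_0 := 1-p_{*1},\quad p_1 := 1-p_{*0},\quad z_0 := \frac{p_{*0}+p_{*1}-1}{p_{*1}},\quad z_1 := \frac{p_{*0}+p_{*1}-1}{p_{*0}}, \\
 x := \frac{(1-p_{1*})\,p_{*1}}{p_{*0}+p_{*1}-1}.
\end{gather*}
These are designed so that $z_0+(1-z_0)p_0 = p_{*0} = 1-p_1$ and, symmetrically, $z_1+(1-z_1)p_1 = p_{*1} = 1-p_0$. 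The hypothesis $p_{*0}+p_{*1}>1$ together with $p_{*0},p_{*1}\le 1$ puts $z_0,z_1,p_0,p_1$ in $[0,1]$; that $x\in[0,1]$ will follow from the hypothesis $p_{0*}p_{*0}+p_{1*}p_{*1}=1$.

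Next I would verify the four cheating bounds by inspecting the protocol, using the three facts about $\WCF(z)$: a cheating Alice forces the outcome $0$ with probability at most $z$, a cheating Bob forces it to $1$ with probability at most $1-z$, and either party may freely let the other win. For a cheating Bob aiming at Alice outputting $0$: if $a=0$ (probability $x$) Alice always outputs $0$; if $a=1$ (probability $1-x$) Bob must force the WCF to output $1$ (probability at most $1-z_1$) and then send $b=0$. This yields success probability $x+(1-x)(1-z_1)=1-(1-x)z_1$, which equals $p_{0*}$ by the choice of $x$. Symmetrically the probability of forcing output $1$ is $1-xz_0=p_{1*}$. For a cheating Alice aiming at Bob outputting $0$, the two a priori best strategies are: send $a=0$ and win the WCF, giving $z_0+(1-z_0)p_0$; or send $a=1$ and lose the WCF, giving $1-p_1$. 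Both equal $p_{*0}$ by construction, and the corresponding statement for forcing output $1$ is symmetric.

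For the honest case the probability that both players output $0$ is
\[
 x\bigl(z_0+(1-z_0)p_0\bigr) + (1-x)(1-z_1)(1-p_1) = p_{*0}\bigl(x+(1-x)(1-z_1)\bigr) = p_{0*}\,p_{*0},
\]
using the identities above together with $x+(1-x)(1-z_1)=p_{0*}$; the calculation for $p_{11}=p_{1*}p_{*1}$ is symmetric.

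The main obstacle is a consistency issue: the parameter $x$ is apparently overdetermined, since it must simultaneously satisfy $p_{0*}=1-(1-x)z_1$ and $p_{1*}=1-xz_0$. A short algebraic manipulation reduces the compatibility of the two resulting expressions for $x$ to the identity $p_{0*}p_{*0}+p_{1*}p_{*1}=1$, so this hypothesis is precisely what makes the parameter system solvable; the same identity is what certifies $x\le 1$. Once that is in place, all remaining verifications are routine.
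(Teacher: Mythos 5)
Your proposal is correct and follows essentially the same route as the paper: the same instantiation of Protocol \texttt{QCoinFlip1} (your $x = \frac{(1-p_{1*})p_{*1}}{p_{*0}+p_{*1}-1}$ is algebraically identical to the paper's $x=\frac{p_{0*}p_{*0}+p_{*1}-1}{p_{*0}+p_{*1}-1}$ under the constraint $p_{0*}p_{*0}+p_{1*}p_{*1}=1$), the same identities $z_a+(1-z_a)p_a=p_{*a}=1-p_{1-a}$, and the same case analysis of cheating strategies and honest outputs. Your observation that the apparent overdetermination of $x$ is resolved exactly by the hypothesis $p_{0*}p_{*0}+p_{1*}p_{*1}=1$ is a nice way of explaining why that hypothesis appears, and all the computations check out.
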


\begin{proof}
We execute Protocol \texttt{QCoinFlip1}, choosing the parameters
\begin{align*}
p_i := 1 - p_{*1-i}\;, \quad  z_0 &:= \frac{p_{*0}+p_{*1}-1}{p_{*1}}\;, \quad z_1 := \frac{p_{*0}+p_{*1}-1}{p_{*0}}\;,\\
 \mbox{and} \quad  x&:=\frac{p_{0*} p_{*0} + p_{*1}-1}{p_{*0}+p_{*1}-1}\;.
\end{align*}
Note that
\begin{align}
\nonumber  1 - z_0 = \frac{1 - p_{*0} }{ p_{*1} } \quad \mbox{and} \quad 1 - z_1 = \frac{1 - p_{*1}}{p_{*0}}\;. 
\end{align}
Since $1 - p_{*0} < p_{*1}$ and $1 - p_{*1} < p_{*0}$, these values are between $0$ and $1$, and hence also $z_0$ and $z_1$ are between $0$ and $1$.
From $p_{0*} \leq 1$ follows that $x \leq 1$, and from $p_{*0} p_{0*}+p_{*1} \geq p_{*0} p_{0*} + p_{*1}  p_{1*} = 1$ that $x \geq 0$.
Furthermore, we have
\begin{align} 
\label{eq:pstern0} z_0+(1-z_0) p_0 = \frac{p_{*0}+p_{*1}-1}{p_{*1}} + \frac{(1 - p_{*1})(1 - p_{*0})}{p_{*1}} = p_{*0}
\end{align}
and
\begin{align}
\nonumber  z_1+(1-z_1) p_1 = \frac{p_{*0}+p_{*1}-1}{p_{*0}} + \frac{(1 - p_{*1})(1 - p_{*0})}{p_{*0}} = p_{*1}\;.
\end{align}
Alice can bias Bob's coin to $0$ with probability
\begin{align*}
\max\lbrace{z_0+(1-z_0) p_0; (1-p_1) \rbrace} = p_{*0}
\end{align*}
and to $1$ with probability
\begin{align*}
\max\lbrace{z_1+(1-z_1) p_1; (1-p_0) \rbrace} = p_{*1}\;.
\end{align*}
The probability that Bob can bias Alice's coin to $0$ is
\begin{align*}
 x+(1-x) (1-z_1)
 & = (1-z_1) + x z_1 \\
 & = \frac{1 - p_{*1}}{p_{*0}} + \frac{p_{0*} p_{*0} + p_{*1}-1}{p_{*0}+p_{*1}-1} \cdot \frac{p_{*0}+p_{*1}-1}{p_{*0}} \\
 & = p_{0*}
\end{align*}
and the probability that he can bias it to $1$ is
\begin{align*}
(1-x)+x (1-z_0) 
 &= 1 - x z_0 \\
&\stackrel{(\ref{eq:pstern0})}{=}  1 - \frac{p_{0*} p_{*0} +p_{*1}-1}{p_{*0}+p_{*1}-1} \cdot \frac{p_{*0}+p_{*1}-1}{p_{*1}} \\
 &= 1 - \frac{p_{0*} p_{*0} +p_{*1}-1}{p_{*1}} \\
 &= \frac{1 -  p_{0*} p_{*0}}{p_{*1}} \\
 &= \frac{p_{1*} p_{*1}}{p_{*1}} = p_{1*}\;.
\end{align*}
Furthermore, two honest players output $0$ with probability
\begin{align*}
 x z_0+x (1-z_0) p_0+(1-x) (1&-z_1) (1-p_1)\\
&= x (z_0+ (1-z_0) p_0)+(1-x) \frac{1-p_{*1}}{p_{*0}} p_{*0}\\
&= x p_{*0} + (1-x) (1-p_{*1}) \\ 
&= 1-p_{*1} + x ( p_{*0} + p_{*1} - 1) \\
&= p_{0*}p_{*0} \\
&= p_{00}
\end{align*}
and $1$ with probability 
$1 - p_{00} = 1 - p_{0*}p_{*0} = p_{1*}p_{*1} = p_{11}$.
\qed
\end{proof}

The following protocol gives optimal bounds for the general case. It uses one instance of the above protocol, and lets Alice and Bob abort in some situations.

\begin{framed}
\noindent
Protocol \texttt{QCoinFlip2}:\\
Parameters: Protocol $P$, $\eps_0, \eps_1 \in [0,\frac12]$.
\begin{itemize}
 \item Alice and Bob execute the coin-flipping protocol $P$.
 \item If Alice obtains $0$, she 
changes to $\Delta$ 
with probability $\eps_0$. If Bob obtains $1$, he 
changes to $\Delta$ 
with probability $\eps_1$. If either Alice or Bob has changed to $\Delta$, they both output $\Delta$, otherwise they output the value obtained from $P$. 
\end{itemize}
\end{framed}

\begin{lemma} \label{lem:q-Err}
Let $p_{0*},p_{1*},p_{*0},p_{*1} \in [0,1]$ where $p_{*0}+p_{*1} > 1$, $p_{0*}+p_{1*} > 1$ and $p_{0*} p_{*0} + p_{*1} p_{*1} \leq 1$.
Given access to $\WCF(z)$ for any $z \in [0,1]$, we can implement a
$\CF(p_{00},p_{11},p_{0*},p_{1*},p_{*0},p_{*1})$
where $p_{00} = p_{0*} p_{*0}$ and  $p_{11} = p_{1*}p_{*1}$.
\end{lemma}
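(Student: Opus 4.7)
The plan is to execute Protocol \texttt{QCoinFlip2} with the inner (abort-free) protocol $P$ supplied by Lemma~\ref{lem:q-noErr}, choosing the abort parameters $\eps_0, \eps_1$ so that the six resulting output probabilities match the target exactly. Honest execution of \texttt{QCoinFlip2} outputs $i \in \{0,1\}$ with probability $p_{ii}'(1-\eps_i)$, where primes denote the parameters of $P$; against a cheating Alice, Bob's output on value $0$ is bounded by $p_{*0}'$ (she will not trigger her own abort) and on value $1$ by $p_{*1}'(1-\eps_1)$ (honest Bob still aborts on $1$), with symmetric bounds against a cheating Bob. To realize the target $\CF(p_{00}, p_{11}, p_{0*}, p_{1*}, p_{*0}, p_{*1})$ we therefore set $p_{*0}' := p_{*0}$, $p_{1*}' := p_{1*}$, $p_{0*}' := p_{0*}/(1-\eps_0)$, and $p_{*1}' := p_{*1}/(1-\eps_1)$.

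For these to be legal inputs to Lemma~\ref{lem:q-noErr} we need $p_{0*}', p_{*1}' \leq 1$ and the equality $p_{0*}'p_{*0}' + p_{1*}'p_{*1}' = 1$. The first condition amounts to $\eps_0 \in [0,\,1-p_{0*}]$ and $\eps_1 \in [0,\,1-p_{*1}]$; the second becomes $\tfrac{p_{00}}{1-\eps_0} + \tfrac{p_{11}}{1-\eps_1} = 1$. Such a pair exists after possibly interchanging the roles of Alice and Bob: summing the hypotheses $p_{*0}+p_{*1}>1$ and $p_{0*}+p_{1*}>1$ gives $(p_{*0}+p_{1*})+(p_{0*}+p_{*1}) > 2$, so at least one of the two sums is $\geq 1$, and by the symmetry of Definition~\ref{def:coinflip} under $A \leftrightarrow B$ we may assume WLOG $p_{*0}+p_{1*} \geq 1$. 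Then at $(\eps_0,\eps_1) = (0,0)$ the left-hand side equals $p_{00}+p_{11} \leq 1$, while at $(1-p_{0*},\,1-p_{*1})$ it equals $p_{*0}+p_{1*} \geq 1$, so by continuity (say along the segment between these corners) a solution exists.

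With $(\eps_0,\eps_1)$ fixed, set $p_{00}' := p_{0*}'p_{*0}'$ and $p_{11}' := p_{1*}'p_{*1}'$. The remaining hypotheses of Lemma~\ref{lem:q-noErr} are immediate, since $p_{0*}'+p_{1*}' \geq p_{0*}+p_{1*} > 1$ and $p_{*0}'+p_{*1}' \geq p_{*0}+p_{*1} > 1$. Lemma~\ref{lem:q-noErr} then provides the required $P$, and plugging it into Protocol \texttt{QCoinFlip2} delivers the lemma once one verifies, as above, that the six composite probabilities are $p_{ii}'(1-\eps_i) = p_{ii}$, $p_{*0}'=p_{*0}$, $p_{1*}'=p_{1*}$, $p_{0*}'(1-\eps_0)=p_{0*}$, and $p_{*1}'(1-\eps_1)=p_{*1}$. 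The main obstacle is establishing existence of $(\eps_0,\eps_1)$; the pigeonhole-plus-continuity argument above handles it, and everything else is routine bookkeeping.
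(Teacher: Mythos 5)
Your proof is correct and takes essentially the same route as the paper: wrap the abort-free protocol of Lemma~\ref{lem:q-noErr} inside Protocol \texttt{QCoinFlip2}, after the same WLOG reduction to $p_{*0}+p_{1*}\geq 1$, and shave the inflated parameters $p'_{0*},p'_{*1}$ back down with the abort probabilities $\eps_0,\eps_1$. The only difference is presentational: the paper fixes $p'_{0*}=\min\bigl(1,(1-p_{1*}p_{*1})/p_{*0}\bigr)$ and the resulting $\eps_0,\eps_1$ in closed form, whereas you establish existence of a suitable pair $(\eps_0,\eps_1)$ by the intermediate value theorem, which is equally valid.
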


\begin{proof}
From $p_{*0}+p_{*1} > 1$ and $p_{0*}+p_{1*} > 1$ follows that either $p_{*0} + p_{1*} > 1$ or  $p_{0*} + p_{*1} > 1$. Without loss of generality, let us assume that $p_{*0} + p_{1*} > 1$.

Let
\begin{align}
\nonumber  p'_{0*} := \min \left ( 1, \frac{1 - p_{1*}p_{*1}}{p_{*0}} \right ) \quad \text{and} \quad p'_{*1} := \frac{ 1 - p'_{0*} p_{*0}} {p_{1*}} \;.
\end{align}
First, note that since $p_{0*} \leq \frac{1 - p_{1*}p_{*1}}{p_{*0}}$
we have $p'_{0*} \geq p_{0*}$. Obviously, we also have $p'_{0*} \leq 1$.
Since $p'_{0*} \leq \frac{1 - p_{1*}p_{*1}}{p_{*0}}$, we have
\begin{align} 
\nonumber p'_{*1} = \frac{ 1 - p'_{0*} p_{*0}} {p_{1*}} \geq \frac{ 1 - \frac{1 - p_{1*}p_{*1}}{p_{*0}} p_{*0}} {p_{1*}}
 = \frac{p_{1*}p_{*1}}{p_{1*}} = p_{*1}\;.
 \end{align}
 
In order to see that $p'_{*1} \leq 1$, we need to distinguish two cases. Since  $p'_{0*} := \min \left ( 1, \frac{1 - p_{1*}p_{*1}}{p_{*0}} \right )$, it holds that either $p'_{0*} = 1$ or $p'_{0*} =  \frac{1 - p_{1*}p_{*1}}{p_{*0}}$. In the first case, 
\begin{align} 
\nonumber  p'_{*1} &= \frac{ 1 - p_{*0}} {p_{1*}} < \frac{  p_{1*}} {p_{1*}} =1
\;,
\end{align}
and the claim holds. In the second case, 
\begin{align} 
\nonumber  p'_{*1} &= \frac{ 1 - p'_{*0}p_{*0}} {p_{1*}} =\frac{ 1 - (1 - p_{1*}p_{*1})} {p_{1*}}=p_{*1} \leq 1\;,
\end{align}
and the claim also holds. Therefore, $p'_{*1} \leq 1$.

Since $p'_{0*} p_{*0} +  p_{1*} p'_{*1} = 1$, according to Lemma~\ref{lem:q-noErr}, we can use protocol \texttt{QCoinFlip1}  to implement a
$\CF(p'_{00}, p'_{11}, p'_{0*},p_{1*},p_{*0},p'_{*1})$, where $p'_{00} = p'_{0*}p_{*0}$ and $p'_{11} = p_{1*}p'_{*1}$.
Using that protocol as protocol $P$, let Alice and Bob execute protocol \texttt{QCoinFlip2} with
 $\eps_0 := 1 - p_{0*} / p'_{0*}$, and $\eps_1 := 1 - p_{*1} / p'_{*1}$.

The probability that Bob can bias Alice to $0$ is now $(1-\eps_0) p'_{0*} = p_{0*}$, and the probability that Alice can bias Bob to $1$ is now $(1-\eps_1) p'_{*1} = p_{*1}$. Furthermore, the probability that two honest players output both $0$ is $(1-\eps_0) p'_{00} = (1-\eps_0) p'_{0*}p_{*0} = p_{0*}p_{*0}$  and the probability that they both output $1$ is $(1-\eps_1) p'_{11} = (1-\eps_1) p_{1*}p'_{*1} = p_{1*}p_{*1}$.
\qed
\end{proof}

\begin{lemma}\label{lemma:quantum-imp}
Let $p_{00},p_{11},p_{0*},p_{1*},p_{*0},p_{*1} \in [0,1]$ with
\begin{align*}
 p_{00} &\leq p_{0*} p_{*0}\;, \\
 p_{11} &\leq p_{1*}p_{*1}\;, \ \text{and}  \\
 p_{00} + p_{11} &\leq 1 \;.
\end{align*}
Then, for any constant $\eps > 0$, there exists a quantum protocol
that implements
$\CF(p_{00},p_{11},p_{0*}+ \eps,p_{1*}+ \eps,p_{*0}+ \eps,p_{*1}+ \eps)$.
\end{lemma}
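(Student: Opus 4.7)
The plan is to reduce to the case handled by Lemma~\ref{lem:q-Err} (or, failing that, the case handled by the classical Theorem~\ref{thm:class}), and then convert the ideal $\WCF(z)$ oracle into $\WCF(z,\eps/2)$ via Theorem~\ref{thm:wcf12}, Proposition~\ref{lem:wcf1}, and Lemma~\ref{lem:wcf2}, absorbing the resulting $2\cdot(\eps/2)=\eps$ loss using Lemma~\ref{lem:shakeWCF}. The preprocessing is a simple linear feasibility step, and the case split reflects exactly the condition under which quantum resources are actually needed.

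First I would perform a preprocessing reduction. If $p_{0*}p_{*0}+p_{1*}p_{*1}>1$, I replace $(p_{0*},p_{1*})$ by smaller values $(p'_{0*},p'_{1*})$ chosen so that $p'_{i*}\le p_{i*}$, $p'_{i*}p_{*i}\ge p_{ii}$, and $p'_{0*}p_{*0}+p'_{1*}p_{*1}=1$. Such a choice exists: the target $(p'_{0*}p_{*0},p'_{1*}p_{*1})$ must lie in the product of intervals $[p_{00},p_{0*}p_{*0}]\times[p_{11},p_{1*}p_{*1}]$, and the hypothesis $p_{00}+p_{11}\le 1\le p_{0*}p_{*0}+p_{1*}p_{*1}$ guarantees that the line with coordinate sum $1$ meets this rectangle. (If $p_{0*}p_{*0}+p_{1*}p_{*1}\le 1$ already, just take $p'_{i*}:=p_{i*}$.) After this step I have $p'_{0*}p_{*0}+p'_{1*}p_{*1}\le 1$.

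Next I split into two cases. \textbf{Case 1:} $p'_{0*}+p'_{1*}\le 1$ or $p_{*0}+p_{*1}\le 1$. Then the correction term in Theorem~\ref{thm:class} vanishes and the classical bound $p'_{0*}p_{*0}+p'_{1*}p_{*1}\le p'_{0*}p_{*0}+p'_{1*}p_{*1}$ is trivial, so a classical protocol for $\CF(p'_{0*}p_{*0},p'_{1*}p_{*1},p'_{0*},p'_{1*},p_{*0},p_{*1})$ exists; Lemma~\ref{lem:chelp} lowers the outputs to $(p_{00},p_{11})$, and since $p'_{i*}\le p_{i*}$ the protocol already realizes the target with no $\eps$ loss at all. \textbf{Case 2:} $p'_{0*}+p'_{1*}>1$ and $p_{*0}+p_{*1}>1$. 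Here the hypotheses of Lemma~\ref{lem:q-Err} hold exactly, giving a protocol for $\CF(p'_{0*}p_{*0},p'_{1*}p_{*1},p'_{0*},p'_{1*},p_{*0},p_{*1})$ that uses a single call to $\WCF(z)$. Substituting $\WCF(z,\eps/2)$ (which exists by Theorem~\ref{thm:wcf12}, Proposition~\ref{lem:wcf1}, and Lemma~\ref{lem:wcf2}) and invoking Lemma~\ref{lem:shakeWCF} inflates each cheating probability by at most $\eps$, and then Lemma~\ref{lem:chelp} shrinks the outputs from $(p'_{0*}p_{*0},p'_{1*}p_{*1})$ to $(p_{00},p_{11})$; using $p'_{i*}\le p_{i*}$ then yields the target $\CF(p_{00},p_{11},p_{0*}+\eps,p_{1*}+\eps,p_{*0}+\eps,p_{*1}+\eps)$.

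The only delicate point is the preprocessing: one must check that smaller $p'_{i*}$ really can simultaneously satisfy the individual lower bounds $p'_{i*}p_{*i}\ge p_{ii}$ and the joint bound $p'_{0*}p_{*0}+p'_{1*}p_{*1}\le 1$; this is essentially a one-dimensional linear feasibility argument on the segment between $(p_{00},p_{1*}p_{*1})$ and $(p_{0*}p_{*0},p_{11})$, which is non-empty precisely because $p_{00}+p_{11}\le 1$. Degenerate cases with $p_{*i}=0$ or $p_{i*}=0$ force $p_{ii}=0$ and are handled by taking $p'_{i*}=0$ trivially.
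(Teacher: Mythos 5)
Your proof is correct and follows essentially the same route as the paper's: reduce the $p_{i*}$ so that $p_{0*}p_{*0}+p_{1*}p_{*1}\leq 1$, then either apply Lemma~\ref{lem:q-Err} together with the $\WCF$ machinery (Theorem~\ref{thm:wcf12}, Proposition~\ref{lem:wcf1}, Lemmas~\ref{lem:wcf2}, \ref{lem:shakeWCF}, \ref{lem:chelp}), or fall back to the classical protocol of Lemma~\ref{lem:protc1} when one of the sums $p_{0*}+p_{1*}$, $p_{*0}+p_{*1}$ is at most $1$. Your variant is in fact slightly more careful than the printed proof, since you perform the case split \emph{after} the reduction, which guarantees that the hypothesis $p'_{0*}+p'_{1*}>1$ of Lemma~\ref{lem:q-Err} actually holds in the quantum branch — a point the paper's proof glosses over.
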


\begin{proof}
Let us first assume that $p_{*0}+p_{*1} > 1$ and $p_{0*}+p_{1*} > 1$.
We reduce the value of $p_{0*}$ to $p_{00} / p_{*0}$ and the value of $p_{1*}$ to $p_{11} / p_{*1}$, which ensures that $p_{0*} p_{*0} + p_{1*} p_{*1} \leq 1$. Now we can apply Lemma~\ref{lem:q-Err}, together with Theorem~\ref{thm:wcf12}, Proposition~\ref{lem:wcf1} and Lemmas~\ref{lem:wcf2},~\ref{lem:shakeWCF} and~\ref{lem:chelp}.

If the assumption does not hold then either $p_{*0}+p_{*1} \leq 1$ or $p_{0*}+p_{1*} \leq 1$. In this case, we can apply Lemmas~\ref{lem:protc1} and~\ref{lem:chelp}.
\qed
\end{proof}

\subsection{Impossibilities}

In order to see that the bound obtained in Section~\ref{subsec:qprotocols} is tight, we can use the proof of Kitaev~\cite{kitaev} (printed in~\cite{abdr}) showing that an adversary can always bias the outcome of a strong quantum coin-flipping protocol. In fact, Equations (36) - (38) in~\cite{abdr} imply that for any quantum coin-flipping protocol, it must hold that 
$p_{11} \leq p_{1*}p_{*1}$. In the same way, it can be proven that $p_{00} \leq p_{0*} p_{*0}$. We obtain the following lemma. 
\begin{lemma} \label{lemma:quant-imposs}
 A $\CF(p_{00},p_{11},p_{0*},p_{1*},p_{*0},p_{*1})$ can only be implemented by a quantum protocol if
\begin{align*}
 p_{00} &\leq p_{0*} p_{*0}\;, \\
 p_{11} &\leq p_{1*}p_{*1}\;, \ \text{and}  \\
 p_{00} + p_{11} &\leq 1 \;.
\end{align*}
\end{lemma}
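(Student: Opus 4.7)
The plan is to prove the three stated inequalities separately. The bound $p_{00}+p_{11}\le 1$ is immediate from the definition of $\CF$: in an honest execution the events ``both output $0$'' and ``both output $1$'' are disjoint, and together with the abort event they exhaust the sample space, so their probabilities sum to at most one. The quantum content is therefore entirely in the two product inequalities $p_{00}\le p_{0*}p_{*0}$ and $p_{11}\le p_{1*}p_{*1}$, and I will prove them by an application of Kitaev's semidefinite-programming based technique, independently for each value of $i\in\{0,1\}$.

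I would set up the standard purified model. Without loss of generality every classical random choice and every intermediate measurement is deferred, so the honest protocol consists of a finite sequence of unitaries acting alternately on Alice's workspace $\mathcal A$ together with a message register $\mathcal M$, and on $\mathcal B\otimes\mathcal M$. Starting from a fixed product state, this produces a pure joint state $\ket\psi\in\mathcal A\otimes\mathcal B$ at the end, and each party measures a designated output qubit using projectors $\Pi_0^A,\Pi_1^A,\Pi_\Delta^A$ (and similarly for Bob). The honest joint success probability for value $i$ is
\begin{equation*}
p_{ii}=\bigl\|(\Pi_i^A\otimes\Pi_i^B)\ket\psi\bigr\|^2,
\end{equation*}
while $p_{i*}$ (resp.\ $p_{*i}$) is the supremum of $\|\Pi_i^A\ket{\tilde\psi}\|^2$ (resp.\ $\|\Pi_i^B\ket{\tilde\psi}\|^2$) over all states $\ket{\tilde\psi}$ reachable by replacing Bob's (resp.\ Alice's) unitaries with arbitrary ones of the same dimension. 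The key step is then Kitaev's product bound: for each $i$,
\begin{equation*}
p_{i*}\cdot p_{*i}\ \ge\ p_{ii}.
\end{equation*}
I would derive this in the usual way from the SDP dual characterisation of the two cheating probabilities: pick dual feasible operator sequences $(Z^A_k)$ and $(Z^B_k)$ that certify $p_{i*}$ and $p_{*i}$ respectively, and then trace the honest protocol forwards while telescoping the product $Z^A_k\otimes Z^B_k$ against the evolving state, using in each round that the relevant unitary commutes with the tensor factor on which it does not act. This yields $p_{ii}\le\tr(Z^A_0\otimes Z^B_0\cdot\rho_0)=p_{i*}p_{*i}$ as desired; applying the same argument with the roles of $0$ and $1$ swapped gives both inequalities.

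The main obstacle to overcome is not the core inequality, which is exactly the content of Equations~(36)--(38) in \cite{abdr}, but rather checking that nothing in Kitaev's derivation assumes $p_{ii}=1/2$ or that honest parties never abort. The argument only manipulates the projectors $\Pi_i^A,\Pi_i^B$ individually; the third outcome $\Delta$ simply contributes the projector $I-\Pi_0-\Pi_1$ on each side and is never used. Hence the bound $p_{i*}p_{*i}\ge p_{ii}$ goes through verbatim for our more general definition of coin flipping, and combining it with the elementary inequality $p_{00}+p_{11}\le 1$ completes the proof.
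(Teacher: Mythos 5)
Your proposal matches the paper's proof: the paper likewise dispatches $p_{00}+p_{11}\leq 1$ as immediate from the definition and obtains $p_{ii}\leq p_{i*}p_{*i}$ for each $i\in\{0,1\}$ by invoking Kitaev's SDP-duality argument (Equations (36)--(38) of~\cite{abdr}), noting as you do that the derivation never uses $p_{ii}=1/2$ or perfect correctness. Your write-up simply spells out the purified model and the telescoping of the dual certificates in more detail than the paper, which cites them.
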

Lemma~\ref{lemma:quantum-imp} and~\ref{lemma:quant-imposs} imply together Theorem~\ref{thm:quant}. 

\section{Conclusions}

We have shown tight bounds for a general definition of coin flipping, which give trade-offs between weak vs.\ strong coin flipping, between bias vs.\ abort probability, and between classical vs.\ quantum protocols.

Our result extends the work of~\cite{ChaKer09}, and shows that the whole advantage of the quantum setting lies in the ability to do weak coin flips (as shown by Mochon~\cite{Mochon07}). If weak coin flips are available in the classical setting, classical protocols can achieve the same bounds as quantum protocols.

For future work, it would be interesting to see if similar bounds hold for the definition of coin flipping without the possibility for the malicious player to abort.

\subsubsection*{Acknowledgments.}
We thank Thomas Holenstein, Stephanie Wehner and Severin Winkler for helpful discussions, and the anonymous referees for their helpful comments. 
This work was funded by the Swiss National Science Foundation (SNSF), an ETHIIRA grant of ETH's research commission, the U.K. EPSRC, grant EP/E04297X/1 and the
Canada-France NSERC-ANR project FREQUENCY. Most of this work was done
while JW was at the University of Bristol.

\bibliographystyle{plain}
\bibliography{coin}

\end{document}